\documentclass[draftcls,onecolumn,12pt]{IEEEtran}
\usepackage{color}
\usepackage{graphicx}
\usepackage{epstopdf}
\usepackage{amsmath}
\usepackage{amssymb}
\usepackage[english]{babel}
\usepackage{cite}
\usepackage{rotfloat}
\usepackage{mathtools}
\usepackage[justification=centering]{caption}
\usepackage{breqn}
\usepackage{amsmath}
\usepackage{makecell}
\usepackage{algorithm,algorithmic}
\usepackage{multirow}
\usepackage{subfigure}
\usepackage{booktabs}
\usepackage{colortbl}
\definecolor{kugray5}{RGB}{224,224,224}
\makeatletter

\makeatother


\makeatletter
\newcommand{\ALOOP}[1]{\ALC@it\algorithmicloop\ #1%
	\begin{ALC@loop}}
	\newcommand{\ENDALOOP}{\end{ALC@loop}\ALC@it\algorithmicendloop}

\makeatother

\usepackage{etoolbox}
\let\mybibitem\bibitem
\renewcommand{\bibitem}[1]{%
	\ifstrequal{#1}{nature}
	{\color{blue}\mybibitem{#1}}
	{\color{black}\mybibitem{#1}}%
}

\graphicspath{ {Figures/} }


\newtheorem{theorem}{Theorem}

\newtheorem{remark}{Remark}
\newtheorem{lemma}{Lemma}

\newtheorem{proof}{Proof}

\newcommand{\epr}{\hfill\(\Box\)}


\DeclareCaptionLabelSeparator{periodspace}{.\quad}


\captionsetup{font=footnotesize,labelsep=periodspace,singlelinecheck=false}
\captionsetup[sub]{font=footnotesize,singlelinecheck=true}
\addto\captionsenglish{}
\interdisplaylinepenalty=2500 

\newcommand\numberthis{\addtocounter{equation}{1}\tag{\theequation}}

\newcommand{\norm}[1]{\left\lVert#1\right\rVert} 
\newcommand{\eq}[1]{\begin{align*}#1\end{align*}} 
\newcommand{\abs}[1]{\left|#1\right|} 

\newcommand{\tr}[1]{\text{trace}\left(#1\right)}



\newcommand{\mean}[1]{\mathbb{E} \left\{#1\right\}}


\newcommand{\setC}{\mathbb{C}}



\newcommand{\mQ}{\textbf{\textit{Q}}}
\newcommand{\mR}{\textbf{\textit{R}}}
\newcommand{\mH}{\textbf{\textit{H}}} 
\newcommand{\mW}{\textbf{\textit{W}}} 

\newcommand{\mI}{\textbf{\textit{I}}}

\newcommand{\mG}{\textbf{\textit{G}}}
\newcommand{\mE}{\textbf{\textit{E}}}

\newcommand{\mT}{\textbf{\textit{T}}}
\newcommand{\mA}{\textbf{\textit{A}}}
\newcommand{\mB}{\textbf{\textit{B}}}
\newcommand{\mU}{\textbf{\textit{U}}}


\newcommand{\vx}{\textbf{\textit{x}}}
\newcommand{\vy}{\textbf{\textit{y}}}

\newcommand{\vu}{\textbf{\textit{u}}}
\newcommand{\vz}{\textbf{\textit{z}}} 
\newcommand{\vh}{\textbf{\textit{h}}}

\newcommand{\vw}{\textbf{\textit{w}}}
 
\newcommand{\va}{\textbf{\textit{a}}}

\newcommand{\vb}{\textbf{\textit{b}}}




\newcommand{\sm}[1]{\sigma_{1}(#1)}

\newcommand{\comM}{{\mathbf{\Psi}}}
\newcommand{\comMs}{{\mathbf{\Psi}}^{\star}}

\newcommand{\um}[1]{\mu_{1}(#1)}
	
\begin{document}	
	\title{Unequally Sub-connected Architecture for Hybrid Beamforming in Massive MIMO Systems}
	\author{Nhan~Thanh Nguyen and 
		Kyungchun~Lee,~\IEEEmembership{Senior Member,~IEEE}
        \thanks{N. T. Nguyen is with the Department of Electrical and Information Engineering, Seoul National University of Science and Technology,  Seoul 01811, Republic of Korea (e-mail: nhan.nguyen@seoultech.ac.kr).}
\thanks{K. Lee is with the Department of Electrical and Information Engineering and the Research Center for Electrical and Information Technology, Seoul National University of Science and Technology,  Seoul 01811, Republic of Korea (e-mail: kclee@seoultech.ac.kr).}
	}
	\maketitle
	
	\begin{abstract}
		A variety of hybrid analog-digital beamforming architectures have recently been proposed for massive multiple-input multiple-output (MIMO) systems to reduce energy consumption and the cost of implementation. In the analog processing network of these architectures, the practical sub-connected structure requires lower power consumption and hardware complexity than the fully connected structure but cannot fully exploit the beamforming gains, which leads to a loss in overall performance. In this work, we propose a novel unequal sub-connected architecture for hybrid combining at the receiver of a massive MIMO system that employs unequal numbers of antennas in sub-antenna arrays. The optimal design of the proposed architecture is analytically derived, and includes antenna allocation and channel ordering schemes. Simulation results show that an enhancement of up to $10\%$ can be attained in the total achievable rate by unequally assigning antennas to sub-arrays in the sub-connected system at the cost of a marginal increase in power consumption. Furthermore, in order to reduce the computational complexity involved in finding the optimal number of antennas connected to each radio frequency (RF) chain, we propose three low-complexity antenna allocation algorithms. The simulation results show that they can yield a significant reduction in complexity while achieving near-optimal performance.
	\end{abstract}
	
	\begin{IEEEkeywords}
		Hybrid precoding, analog combining, sub-connected architecture, massive MIMO, millimeter wave.
	\end{IEEEkeywords}
	\IEEEpeerreviewmaketitle
	
	\section{Introduction}
	
	In mobile communication, massive multiple-input multiple-output (MIMO) systems, where a base station (BS) is equipped with a large number of antennas, have recently been considered to drastically improve system performance in terms of spectral and energy efficiency \cite{marzetta2010noncooperative, nguyen2017cell, nguyen2018coverage}. In the conventional frequency band, precoding is typically processed only in the digital domain for interference mitigation among spatial substreams, which results in the requirement of a dedicated radio frequency (RF) chain and an analog-to-digital or digital-to-analog converter (ADC/DAC) for each antenna \cite{gao2016energy, niu2017low}. As a result, the cost and power consumption of the transceiver increase approximately proportionally to the number of antennas \cite{sandhu2003analog, gholam2011beamforming}, which can lead to excessive power consumption in massive MIMO systems. 
	
	\subsection{Related works}
	
	The hybrid analog-digital architecture, where signal processing is divided into the RF and baseband domains, is considered a practical transceiver design for massive MIMO systems because it can provide an enhanced tradeoff between the achievable spectral efficiency and power consumption \cite{gao2018low, park2017dynamic, el2014spatially, alkhateeb2014mimo, heath2016overview}.

	A line of research has sought to optimize the tradeoff between the performance and power consumption of hybrid precoding/combining by optimally designing an analog precoding network based on phase shifters and switches \cite{gholam2011beamforming, payami2016hybrid, mendez2016hybrid, bogale2016number, venkateswaran2010analog, gao2018low}. In \cite{gholam2011beamforming}, Gholam et al. present three simplified analog combining architectures that rely on different combinations of phase shifters and variable gain amplifiers. Out of these three architectures, the one based only on phase shifters provides the best bit-error-rate (BER) performance. Gholam et al. have shown that their proposed architecture can reduce overall system cost and power consumption; however, these reductions come at the expense of a signal-to-noise ratio (SNR) loss of at least 2 dB. In \cite{payami2016hybrid}, Payami et al. propose a technique that successively approximates the desired overall analog precoder as a linear combination of practical analog precoders, which employ only a practical number of phase shifters. Although the approach proposed in \cite{payami2016hybrid} can reduce the power consumption on the RF end, improvements in spectral efficiency are seen only when the channel follows the Rayleigh fading model, which is generally impractical in mmWave communications due to limited scatters. The work of \cite{mendez2016hybrid} proposes switch-only architectures in order to reduce complexity and power consumption on an order of $40\%–75\%$ while providing equivalent or better channel estimation performance and spectral efficiency when compared with structures based on phase shifters for which the number of quantization bits is low. However, when there are more than four quantization bits, switch-only architectures suffer from a significant loss in spectral efficiency  compared to the architectures that employ phase shifters. In order to exploit the advantages of both switches and phase shifters, \cite{bogale2016number} proposes an analog architecture that employs both. This architecture profits from an improved tradeoff between performance and power consumption by optimizing the number of RF chains and phase shifters used in the analog part.

	Most past studies have considered fully connected structures for hybrid precoding to achieve full precoding gains \cite{el2014spatially, alkhateeb2014channel, liang2014low, alkhateeb2015limited, lee2015hybrid, kim2015mse, han2015large, gao2016energy}. To further reduce power consumption and hardware complexity, the sub-connected structure, which connects each RF chain to only a subset of antennas, can also be considered \cite{han2015large, gao2016energy}. However, its beamforming gain is only $1/N$ that of the fully connected structure, where $N$ is the number of sub-antenna arrays in the sub-connected structure. In the literature, few studies \cite{he2016energy, gao2016energy, park2017dynamic, li2017hybrid, zhu2016adaptive, chen2007efficient} have focused on improving the performance of the sub-connected structure. In \cite{gao2016energy}, an algorithm called successive interference cancellation (SIC)-based hybrid precoding was proposed to improve the energy efficiency of the sub-connected architecture. In \cite{park2017dynamic}, a dynamic sub-array structure was proposed to dynamically adapt to the spatial channel covariance matrix. In \cite{he2016energy}, an analog precoder was designed by using the idea of interference alignments. This precoding scheme exploits the alternating direction optimization method, where the phase shifter is adjusted using an analytical structure to optimize the analog precoder and combiner. In \cite{li2017hybrid}, a low-complexity hybrid combining design based on virtual path selection was introduced for both fully connected and sub-connected structures. In \cite{zhu2016adaptive}, a switch-based adaptive sub-connected architecture is proposed for hybrid precoding in multiuser massive MIMO systems. Using a switching network, the precoding weights and their position in the precoding matrix are adaptively selected to match the channel entries with the largest amplitude, resulting in a significant improvement in the total achievable rate. In contrast, an adaptive antenna selection scheme is proposed for the transmitter with a limited number of RF chains in \cite{chen2007efficient}. In this scheme, a subset of the transmit antennas are adaptively connected to RF chains using a switching network, whereas the corresponding subset of transmit antennas is selected by low-complexity transmit selection algorithms. All these schemes have been proposed to optimize the hybrid precoder/combiner for the sub-connected structure in which the same number of antennas or a single antenna is assigned to the sub-arrays. This work proposes a novel sub-connected architecture to further improve performance.

	\subsection{Contributions}
	Most past work in the area considered hybrid precoding schemes at the transmitter of massive MIMO systems with a focus on the fully connected architecture in analog processing. Compared with the fully connected architecture, the sub-connected structure is more advantageous in terms of practical deployment, complexity, and power consumption. However, its application to the receiver in massive MIMO systems has not been extensively investigated. In this work, we propose a novel unequal sub-connected architecture to improve the achievable rate of hybrid combining at the receiver in massive MIMO systems. The main idea of this scheme is to optimize the number of antennas assigned to the sub-antenna arrays based on channel conditions, which is achieved by the theoretical analysis and low-complexity antenna allocation algorithms.
	
	Although some adaptive hybrid beamforming schemes have been introduced in literature \cite{zhu2016adaptive, chen2007efficient}, our proposed scheme has a novel system structure. Specifically, in most of the prior works on sub-connected architectures, the same number of antennas are assigned to sub-arrays \cite{zhu2016adaptive, gao2016energy, park2017dynamic} or only a subset of the antennas are selected for signal precoding/combining \cite{chen2007efficient}. Unlike those prior works, our proposed architecture allows forming sub-arrays with different numbers of antennas, and all the antennas are employed for signal combining. Further comparison in terms of system structure and performance are given in Section \ref{sec:simulation_result}.

	Our main contributions can be summarized as follows:
	\begin{itemize}
		\item We first investigate a system employing the conventional sub-connected structure, where the same number of antennas are assigned to each sub-antenna array. In particular, we derive the upper bound of its total achievable rate when factorization-aided analog combining is employed. We then show that this upper bound is unreachable owing to the fixed allocation of antennas in the conventional sub-connected structure, which limits overall system performance.
		
		\item We then propose a novel sub-connected architecture called the \textit{unequal sub-array} (UESA) architecture, where different numbers of antennas can be assigned to sub-antenna arrays based on channel conditions. In the proposed architecture, the upper bound of the total achievable rate can be enhanced and, at the same time, the total achievable rate becomes more likely to reach its upper bound. As a result, overall performance can be improved.
		
		\item Although an improvement in performance is achieved, the proposed UESA architecture requires high computational complexity to find the optimal number of antennas for each sub-antenna array via an exhaustive search (ES). To solve this problem, we propose three low-complexity near-optimal algorithms that can substantially reduce the computational burden at the cost of only marginal performance losses.
		
		\item The power consumption and energy efficiency of the proposed architecture are also evaluated. Although the switching network in the proposed architecture causes additional power consumption compared with the conventional architecture, this is not significant. Finally, simulations are performed to justify the performance improvements of the proposed architecture and algorithms. Furthermore, the performance of the proposed schemes are compared to those of existing adaptive hybrid beamforming architectures, including the schemes in \cite{zhu2016adaptive} and \cite{chen2007efficient}. Our simulation results show that the proposed schemes perform far better than the adaptive antenna selection scheme in \cite{chen2007efficient}. It is also shown that by combining the proposed schemes with the adaptive hybrid beamforming scheme in \cite{zhu2016adaptive}, significant improvements in the total achievable rates can be achieved.
	\end{itemize}
	
	The remainder of this paper is organized as follows: Section II introduces the system model, and Section III presents the factorization-aided analog-combining algorithm and the achievable rate of the conventional sub-connected structure. In Section IV, the total achievable rate and power consumption of the proposed sub-connected architecture are analyzed, and low-complexity antenna allocation algorithms are proposed. Section V presents simulation results, followed by the conclusion in Section VI.

	\textit{Notations}: Throughout this paper, scalars, vectors, and matrices are denoted by lower-case, bold-face lower-case, and bold-face upper-case letters, respectively. The $(i,j)$th element of a matrix $\textbf{\textit{A}}$ is denoted by $a_{i,j}$, whereas $(\cdot)^T$ and $(\cdot)^H$ denote the transpose and conjugate transpose of a matrix, respectively. Furthermore, $\norm{\cdot}$ represents the Frobenius norm of a matrix while $\abs{\mathbb{S}}$ denotes the cardinality of a set $\mathbb{S}$. Moreover, $\mA(i:j)$ and $\mA(i:j,n)$ with $j>i$ denote a sub-matrix of $\mA$ and a sub-column of the $n$th column of $\mA$, respectively, consisting of the elements on rows $\left\{i, i+1, \ldots, j \right\}$ of $\mA$. Finally, $\mI_N$ denotes the identity matrix of size $N \times N$, and $\mathbf{0}$ is a vector of zeros with an appropriate number of dimensions.
	
	\section{System model}
	
	\begin{figure*}[t]
		\subfigure[Conventional ESA architecture]
		{
			\includegraphics[scale=0.6]{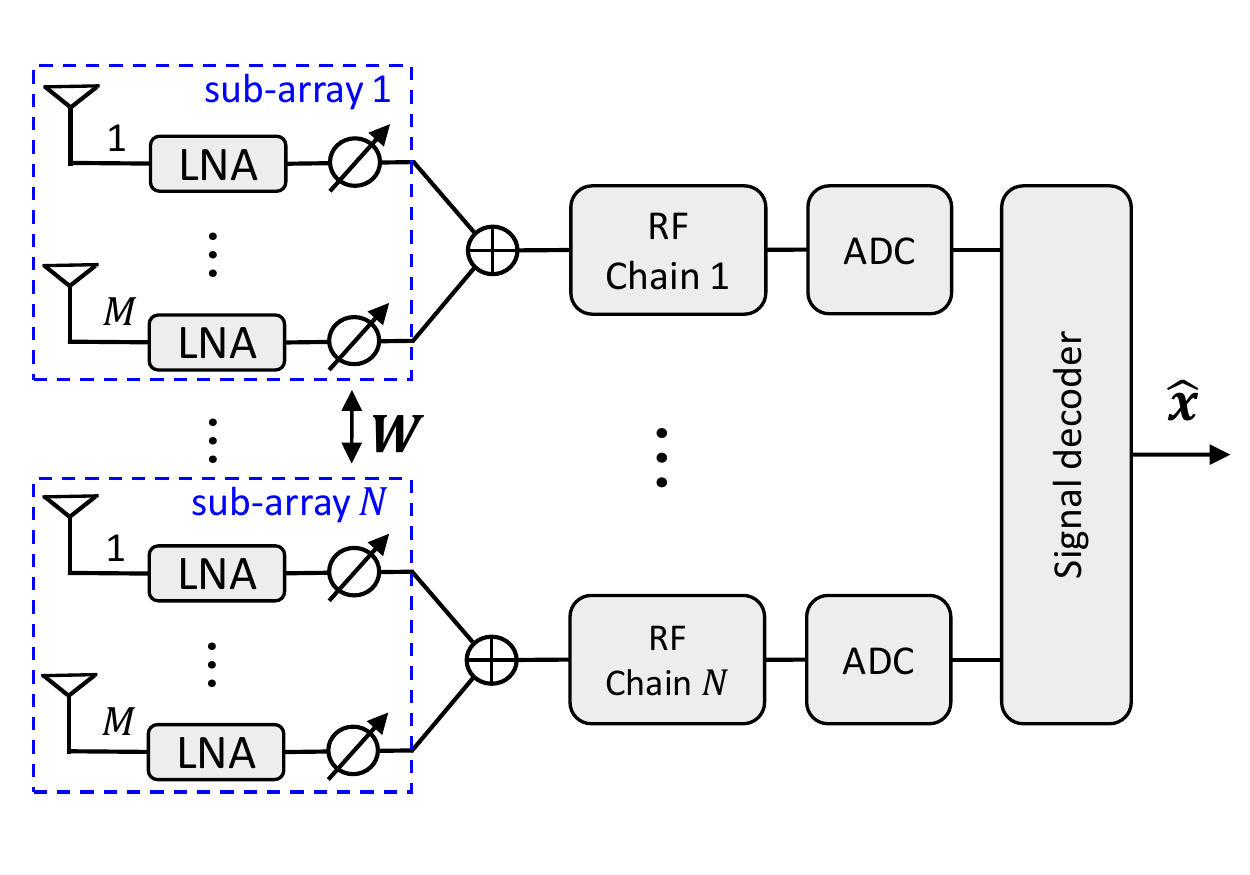}
			\label{fig:ESA_model}
		}
		\subfigure[Proposed UESA architecture]
		{
			\includegraphics[scale=0.28]{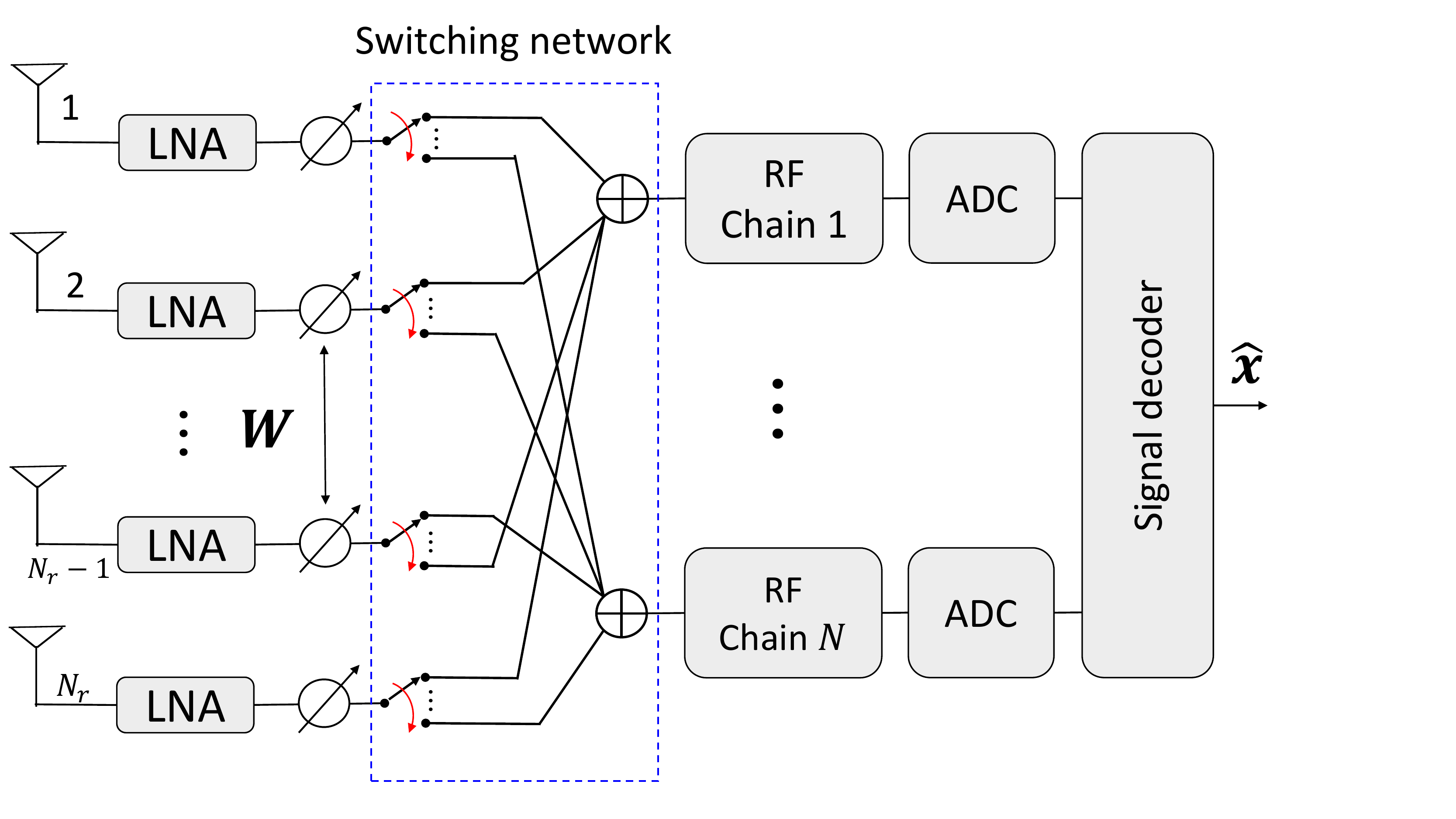}
			\label{fig:DynamicUESA}
		}
		\caption{Illustration of the conventional ESA and proposed UESA architectures.}
	\end{figure*}

	
	Consider the uplink of a multi-user MIMO system consisting of a BS equipped with $N_r$ receive antennas and $N$ RF chains, where $N < N_r$, and $K$ single-antenna mobile stations (MSs). For simplicity, we assume that $N = K$. In a massive MIMO setup, $N_r$ is assumed to be much larger than $K$. In this work, for analog combining, we focus on the sub-connected architecture where each RF chain is connected to a sub-array, which is a group of phase shifters and antenna elements. The conventional sub-connected architecture is illustrated in Fig. \ref{fig:ESA_model}, where $N$ groups of $M$ antennas are equally assigned to $N$ sub-arrays, which are in turn connected to $N$ RF chains \cite{he2016energy, niu2017low}. In this work, we refer to this conventional structure as the \textit{equal sub-array} (ESA) architecture. Furthermore, we assume that following processing by the analog combiner, the received signals are passed through capacity-achieving advanced digital receivers, such as sphere decoding \cite{viterbo1999universal} and tabu search \cite{nguyen2019qr}. The analog-combined signal at the receiver can be expressed as
	\begin{align}
		\label{SM}
		\vy = \mW^H \mH \vx + \mW^H \vz,
	\end{align}
	where $\vx \in \setC^{K \times 1}$ is the vector of symbols sent from $K$ MSs. We assume that the average transmit power of each MS is normalized to one, i.e., $\mean{\abs{x_k}^2}=1, k = 1, \ldots, K$, and $\vz \in \setC^{N_r \times 1}$ is a vector of independent and identically distributed (i.i.d.) samples of additive white Gaussian noise (AWGN), where $z_i \sim \mathcal{CN}(0,N_0)$. Furthermore, $\mH  \in \setC^{N_r \times K}$ denotes the channel matrix consisting of $K$ column vectors $\vh_1, \vh_2, \ldots, \vh_K$ representing channels between $K$ MSs and the BS. Each channel entry $h_{i,k}$ represents the complex channel gain between the $k$th MS and the $i$th receive antenna of the BS. The analog combining matrix $\mW \in \setC^{N_r \times N}$ is given by
	\eq{
		\mW = 
		\begin{bmatrix}
			\vw_1  & \mathbf{0} & \mathbf{0} & \mathbf{0}\\
			\mathbf{0}  & \vw_2 & \mathbf{0} & \mathbf{0}\\
			\mathbf{0}  & \mathbf{0} & \ldots & \mathbf{0}\\
			\mathbf{0}  & \mathbf{0} & \mathbf{0} & \vw_N\\
		\end{bmatrix},
	}
	where $\vw_n = [w_{n}^{(1)}, w_{n}^{(2)}, \ldots, w_{n}^{(M)}]^T$ is the analog weighting vector for the $n$th sub-array, and $w_{n}^{(m)}$ is the $m$th element of $\vw_n$, which has the constant amplitude $1/\sqrt{M}$ but different phases, i.e., $w_{n}^{(m)} = \frac{1}{\sqrt{M}} e^{j \theta_{n}^{(m)}}, n=1,\ldots,N, m=1,\ldots,M$ \cite{gao2016energy, hur2013millimeter}.
	
	\section{Factorization-aided analog Combining and the conventional ESA system}
	
	\subsection{Factorization-aided analog combining algorithm}
	The idea of factorization-aided hybrid precoding was first proposed in \cite{gao2016energy} to optimize the hybrid precoding matrix at a transmitter employing the conventional ESA architecture. In this section, we extend it to optimize the analog combiner at the receiver. The total achievable rate $R$ for the analog-combined signal in \eqref{SM} can be expressed as \cite{el2014spatially}
	\begin{align}
		\label{rate_1}
		R = \log_2 \det \left(\mI_N +  \mR^{-1}\mW^H \mH \mH^H \mW\right),
	\end{align}
	where $\mR = N_0 \mW^H \mW$ is the noise covariance matrix after combining. Because $\mW$ has a block-diagonal structure with entries in the form of $w_{n}^{(m)} = \frac{1}{\sqrt{M}} e^{j \theta_{n}^{(m)}}$, we have $\mW^H \mW = \mI_N$. By letting $\rho = \frac{1}{N_0}$, we have $\mR^{-1} = \rho \mI_N$, and \eqref{rate_1} can be rewritten as
	\begin{align*}
		R = \log_2 \det \left(\mI_K +  \rho \mH^H \mW \mW^H \mH \right). \numberthis \label{rate_2}
	\end{align*}
	The optimal analog combiner is the solution of the problem of optimizing the total achievable rate, which can be formulated as
	\begin{align*}
		\mW^{\star} = \arg \max_{\mW} R,  \hspace{0.5cm}
		\text{s.t.} \hspace{0.5cm} \vw_1, \vw_2, \ldots, \vw_N \in \mathcal{F}, \numberthis \label{opt_1}
	\end{align*}
	where $\mathcal{F}$ is the set of feasible analog combiners.	We define $\mH_n \in \setC^{M\times K}$ as the $n$th sub-matrix of $\mH$ consisting of rows $\left\{M(n-1)+1, \ldots, Mn \right\}$ in $\mH$. Owing to the block-diagonal structure of $\mW$, we have
	\begin{align*}
		\mH^H \mW = \left[\mH_1^H \vw_1, \mH_2^H \vw_2, \ldots, \mH_N^H \vw_N \right],
	\end{align*}
	which leads to $\mH^H \mW \mW^H \mH =  \sum_{n=1}^{N} \mG_n,$ where $\mG_n = \mH_n^H \vw_n \vw_n^H \mH_n$. The following lemma expresses the total achievable rate $R$ in \eqref{rate_2} as the sum of sub-rates:
	
	\begin{lemma}
		\label{lemma:factorization_rate}
		The total achievable rate can be factorized into the sum of $N$ sub-rates corresponding to $N$ sub-arrays as follows:
		\begin{align*}
			R =  \sum_{n=1}^{N} \log_2 \left(1 + \rho \vw_n^H \mT_n \vw_n \right), \numberthis \label{rate_4}
		\end{align*}
		where $\mT_n = \mH_n \mQ_{n-1}^{-1} \mH_n^H$, with $\mQ_0 = \mI_K$ and
		\begin{align*}
			\mQ_{n-1}  = \mQ_{n-2} + \rho \mG_{n-1}. \numberthis \label{Q_n_1}
		\end{align*}
	\end{lemma}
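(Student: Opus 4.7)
The plan is to start from the closed form in \eqref{rate_2}, exploit the additive decomposition $\mH^H \mW \mW^H \mH = \sum_{n=1}^{N} \mG_n$ already established in the excerpt, and then peel off one sub-array at a time via a telescoping determinant identity. Specifically, I would apply the factorization
$$\det(\mA + \mB) = \det(\mA)\,\det(\mI_K + \mA^{-1}\mB)$$
with $\mA = \mQ_{n-1}$ and $\mB = \rho\,\mG_n$, using the recursion $\mQ_n = \mQ_{n-1} + \rho\,\mG_n$ from \eqref{Q_n_1}. This yields the telescoping product
$$\det\!\left(\mI_K + \rho \sum_{n=1}^{N} \mG_n\right) \;=\; \det(\mQ_N) \;=\; \prod_{n=1}^{N} \det\!\left(\mI_K + \rho\,\mQ_{n-1}^{-1} \mG_n\right),$$
after noting that $\mQ_0 = \mI_K$ and therefore the first factor in the product is exactly the term obtained at $n=1$.

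Next I would exploit the rank-one structure $\mG_n = \mH_n^H \vw_n \vw_n^H \mH_n$. Applying Sylvester's determinant identity $\det(\mI + \mX\mY) = \det(\mI + \mY\mX)$ with $\mX = \rho\,\mQ_{n-1}^{-1} \mH_n^H \vw_n$ (a $K \times 1$ column) and $\mY = \vw_n^H \mH_n$ (a $1 \times K$ row) collapses each $K \times K$ determinant to a scalar:
$$\det\!\left(\mI_K + \rho\,\mQ_{n-1}^{-1} \mH_n^H \vw_n \vw_n^H \mH_n\right) \;=\; 1 + \rho\, \vw_n^H \mH_n \mQ_{n-1}^{-1} \mH_n^H \vw_n \;=\; 1 + \rho\, \vw_n^H \mT_n \vw_n,$$
by the definition $\mT_n = \mH_n \mQ_{n-1}^{-1} \mH_n^H$. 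Taking $\log_2$ of the product converts it into the sum announced in \eqref{rate_4}.

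The only subtlety I foresee is justifying the invertibility of each $\mQ_{n-1}$ so that the identity $\det(\mA+\mB) = \det(\mA)\det(\mI + \mA^{-1}\mB)$ is applicable. This follows by a short induction: $\mQ_0 = \mI_K$ is positive definite, and each update adds the positive semidefinite term $\rho\,\mG_{n-1} = \rho\,\mH_{n-1}^H \vw_{n-1} \vw_{n-1}^H \mH_{n-1}$, preserving positive definiteness and hence invertibility throughout the recursion. Beyond this verification, the argument is purely algebraic and should fit in a few lines; the main obstacle is really just choosing the cleanest ordering so that the telescoping and the rank-one collapse are both visible in one pass.
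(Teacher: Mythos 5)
Your proposal is correct and is essentially the same argument as the paper's: the paper writes $\mQ = \mE_1\mE_2\cdots\mE_N$ with $\mE_n = \mI_K + \rho\,\mQ_{n-1}^{-1}\mG_n$, which is exactly your telescoping determinant factorization, and then collapses each factor with $\det(\mI_K + \va\vb^T) = 1 + \vb^T\va$, i.e., the same rank-one Sylvester step you use. Your explicit remark on the positive definiteness (hence invertibility) of each $\mQ_{n-1}$ is a point the paper leaves implicit, but otherwise the two proofs coincide.
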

	
	\begin{proof}
		See Appendix \ref{appendix:factorization_rate}.\epr
	\end{proof}
	
	Lemma \ref{lemma:factorization_rate} reveals that the total achievable rate $R$ can be optimized by optimizing the sub-rates 
	\begin{align}
		\label{R_n}
		R_n = \log_2 \left(1 + \rho \vw_n^H \mT_n \vw_n \right), n=1,2,\ldots,N,
	\end{align}
	where $R_n$ is the sub-rate corresponding to the $n$th sub-array. As a result, finding the optimal combining matrix $\mW^{\star}$ in \eqref{opt_1} can be factorized into finding the combining vectors $\vw_n^{\star}, n = 1, \ldots, N,$ that are the solutions of
	\begin{align*}
		\vw_n^{\star} = \arg \max_{\vw_n} R_n, \hspace{0.5cm} \text{s.t.} \hspace{0.5cm} \vw_n \in \mathcal{F}, \numberthis \label{opt_2}
	\end{align*}
	and the optimal solution is given by \cite{gao2016energy}
	\begin{align*}
		\vw_n^{\star} = \arg \min_{\vw_n \in \mathcal{F}} \norm{\vu_n^{\star} - \vw_n}^2,
	\end{align*}
	where $\vu_n^{\star}$ is the eigenvector of $\mT_n$ corresponding to its largest eigenvalue. We note that $\mT_n$ is a positive semidefinite Hermitian matrix because by \eqref{Q_n_1}, $\mQ_{n-1}$ is a positive semidefinite matrix.
		
	\begin{algorithm}[t]
		\caption{Factorization-aided analog-combining algorithm}
		\label{algorithm:factorization_AC}
		\begin{algorithmic}[1]
			\STATE {$\mQ_{0} = \mI_K$}
			\FOR {$n = 1 \rightarrow N$} 
			\STATE {$\mH_n = \mH(M(n-1)+1:Mn)$}
			\STATE {$\mT_n = \mH_n \mQ_{n-1}^{-1} \mH_n^H$}
			\STATE {Set $\vu_n^{\star}$ to the eigenvector corresponding to the largest eigenvalue of $\mT_n$.}
			\STATE {$\vw_n^{\star} = \mathcal{Q} (\vu_n^{\star})$}
			\STATE {$\mW^{\star}(M(n-1)+1:Mn,n) = \vw_n^{\star}$}
			\STATE {$\mG_n = \mH_n^H \vw_n^{\star} {\vw_n^{\star}}^H \mH_n$}
			\STATE {$\mQ_{n} = \mQ_{n-1} + \rho \mG_{n}$}
			\ENDFOR
		\end{algorithmic}
	\end{algorithm}

	The $n$th sub-rate $R_n$ in \eqref{R_n} can be rewritten as
	\begin{align}
		\label{R_n_Q}
		R_n = \log_2 \left(1 + \rho \vw_n^H \mH_n \mQ_{n-1}^{-1} \mH_n^H \vw_n \right).
	\end{align}
	It is evident from \eqref{Q_n_1} and \eqref{R_n_Q} that $R_n$ depends not only on $\mH_n$, but also on $\mH_{n-1}, \mH_{n-2}, \ldots, \mH_1$. Thus, given the order of $\left\{\mH_1, \mH_2, \ldots, \mH_N \right\}$, the optimization problems in \eqref{opt_2} can be solved one by one in order of $\left\{\vw_1^{\star}, \vw_2^{\star}, \ldots, \vw_{N}^{\star}\right\}$. This procedure is presented in Algorithm \ref{algorithm:factorization_AC}. In particular, in step 3, the $n$th sub-matrix $\mH_n$ is obtained from $\mH$, which allows $\mT_n$ and $\vu_n^{\star}$ to be computed in steps 4 and 5, respectively. In step 6, the combining vector $\vw_n^{\star}$ is obtained by quantizing $\vu_n^{\star}$, which ensures that the resultant analog combiner belongs to the feasible set $\mathcal{F}$. Step 8 computes $\mG_{n}$, which allows $\mQ_n$ to be updated in step 9 based on \eqref{Q_n_1}.

	\subsection{Conventional ESA architecture}

	Let $\sigma_{1}(n)$ be the largest eigenvalue of $\mT_n$. The total achievable rate in \eqref{rate_4} is upper-bounded by
	\begin{align*}
	R_{\text{ESA}} \leq R_{\text{ESA}}^{\text{ub},1} = \sum_{n=1}^{N} \log_2 \left[1 + \rho \sigma_{1}(n)\right]. \numberthis \label{rate_5}
	\end{align*}
	The equality occurs when $\vw_n^{\star} = \vu_n^{\star}, n=1,2,\ldots,N$, i.e., quantization is not applied to generate $\vw_n^{\star}$. By applying Jensen's inequality, we have
	\begin{align*}
		R_{\text{ESA}}^{\text{ub},1} \leq  R_{\text{ESA}}^{\text{ub}} = N \log_2 \left(1 + \frac{\rho}{N} \sum_{n=1}^{N}  \sm{n} \right), \numberthis \label{rate_8}
	\end{align*}
	where the equality occurs when $\sm{1} = \sm{2} = \ldots = \sm{N}$, i.e., the largest eigenvalues of $\mT_1, \mT_2, \ldots, \mT_N $ are the same.
	
	\begin{lemma}
		\label{lemma:trace_Tn_decrease}
		In the conventional ESA system, ${\tr{\mT_n}}$ is a descending function of $n$ when $N_r$ grows while $K$ is kept constant.
	\end{lemma}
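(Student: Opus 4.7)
My plan is to combine the cyclic property of the trace with the Löwner-monotonicity of matrix inversion, and then invoke a large-$M$ law-of-large-numbers argument that exploits the massive MIMO regime stated in the lemma.

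First, I would rewrite
\begin{align*}
\tr{\mT_n} \;=\; \tr{\mH_n \mQ_{n-1}^{-1}\mH_n^H} \;=\; \tr{\mQ_{n-1}^{-1}\mH_n^H\mH_n},
\end{align*}
so that the $n$-dependence splits into a ``channel Gram'' factor $\mH_n^H\mH_n$ and an ``accumulated interference'' factor $\mQ_{n-1}^{-1}$. Since $N_r\to\infty$ with $K$ fixed implies $M=N_r/N\to\infty$, the standard massive MIMO channel-hardening argument (law of large numbers applied to the columns of $\mH_n$, whose entries are i.i.d. with unit variance in the model adopted in the paper) gives $\tfrac{1}{M}\mH_n^H\mH_n \to \mI_K$ for every $n$. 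Hence, to leading order in $M$,
\begin{align*}
\tr{\mT_n} \;\approx\; M\,\tr{\mQ_{n-1}^{-1}},
\end{align*}
with the same multiplicative $M$ independent of $n$. This reduces the problem to showing that $\tr{\mQ_{n-1}^{-1}}$ decreases in $n$.

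For that reduction, I would use the recursion $\mQ_{n-1}=\mQ_{n-2}+\rho\,\mG_{n-1}$ together with $\mG_{n-1}=\mH_{n-1}^H\vw_{n-1}\vw_{n-1}^H\mH_{n-1}\succeq 0$. This gives the Löwner-order inequality $\mQ_{n-1}\succeq\mQ_{n-2}\succeq\cdots\succeq\mQ_0=\mI_K\succ 0$, so all $\mQ_{n-1}$ are positive definite and therefore invertible. Since the matrix inverse is operator-antitone on positive-definite matrices, $\mQ_{n-1}^{-1}\preceq\mQ_{n-2}^{-1}$, and because the trace preserves the Löwner order on Hermitian matrices, $\tr{\mQ_{n-1}^{-1}}\leq\tr{\mQ_{n-2}^{-1}}$. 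Combining this with the leading-order identity above yields $\tr{\mT_n}\leq\tr{\mT_{n-1}}$ in the limit $N_r\to\infty$ with $K$ fixed, which is the claim.

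The main obstacle is the rigor of the channel-hardening step: one must justify that the off-diagonal terms of $\tfrac{1}{M}\mH_n^H\mH_n$ vanish fast enough to dominate the $n$-to-$n$ differences arising from $\mQ_{n-1}^{-1}$, and that $\mQ_{n-1}^{-1}$ does not itself shrink to zero and wash out the comparison. I would handle this by noting that $\mG_{n-1}$ is rank one and has bounded operator norm (since $\norm{\vw_{n-1}}=1$ and $\norm{\mH_{n-1}}^2=O(M)$), so $\tr{\mQ_{n-1}^{-1}}$ is bounded below by a strictly positive constant depending only on $n$, while the fluctuations $\tfrac{1}{M}\mH_n^H\mH_n-\mI_K$ vanish at rate $O(1/\sqrt{M})$ by standard concentration. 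A clean way to state the final result is therefore the asymptotic inequality $\lim_{N_r\to\infty}\bigl[\tr{\mT_{n-1}}-\tr{\mT_n}\bigr]/M\geq 0$, with strict inequality whenever $\mG_{n-1}\neq \mathbf{0}$, which is the sense in which $\tr{\mT_n}$ is a descending function of $n$ claimed by the lemma.
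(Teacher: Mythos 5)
Your proof is correct and reaches the same intermediate reduction as the paper---the cyclic trace identity followed by the channel-hardening approximation $\mH_n^H\mH_n \approx M\mI_K$, giving $\tr{\mT_n}\approx M\tr{\mQ_{n-1}^{-1}}$---but you establish the monotonicity of $\tr{\mQ_{n-1}^{-1}}$ by a genuinely different and, in fact, cleaner route. The paper works from the multiplicative factorization $\mQ_{n-1}=\mQ_{n-2}\mE_{n-1}$, applies the trace inequality $\tr{\mA\mB}\le\lambda_{max}(\mA)\tr{\mB}$, and then argues $\lambda_{min}(\mE_{n-1})>1$; since $\mG_{n-1}$ is rank one, $\mQ_{n-2}^{-1}\mG_{n-1}$ has $K-1$ zero eigenvalues, so in fact $\lambda_{min}(\mE_{n-1})=1$ for $K>1$ and the paper's chain as written only delivers a non-strict inequality. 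Your argument instead uses the additive recursion $\mQ_{n-1}=\mQ_{n-2}+\rho\mG_{n-1}$ with $\mG_{n-1}\succeq 0$, the antitonicity of matrix inversion on positive-definite matrices, and the order-preservation of the trace; this is more elementary, sidesteps the eigenvalue subtlety entirely, and your remark that the decrease is strict whenever $\mG_{n-1}\neq\mathbf{0}$ (which follows from the Sherman--Morrison form of $\mQ_{n-2}^{-1}-\mQ_{n-1}^{-1}$, a nonzero rank-one positive-semidefinite matrix and hence of positive trace) recovers the strict monotonicity the lemma intends. The only caveat is cosmetic: the paper's channel is the geometric Saleh--Valenzuela model rather than one with i.i.d.\ entries, but the hardening approximation $\mH_n^H\mH_n\approx M\mI_K$ is exactly the one the paper itself invokes by citation, so your step matches theirs in substance even if your justification via a law of large numbers presumes a slightly different statistical model.
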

	
	\begin{proof}
		See Appendix \ref{appendix:prove_traceTn_decrease}. \epr
	\end{proof}
	
	\begin{remark}
		By Lemma 2, ${\tr{\mT_n}}$, which is the sum of all eigenvalues of $\mT_n$, decreases with $n$ in the ESA system. Under the typical assumption for massive MIMO systems whereby $N_r$ is sufficiently larger than $K$, it is nearly impossible for the largest eigenvalues $\sm{n}$, $n=1,2,\ldots,N$ to satisfy the condition on equality in \eqref{rate_8}, i.e., $\sm{1} = \ldots = \sm{N}$ in the ESA system. This is also confirmed by the simulation results in Section V, which shows that similar to ${\tr{\mT_n}}$, $\sm{n}$ tends to decrease with $n$.
		
		This result shows that in the ESA system, it is difficult to reach the upper bound of the total achievable rate in \eqref{rate_8}, which motivates us to design a new architecture called the UESA in the next section.
	\end{remark} 
	
	\section{Proposed UESA architecture}
	
	To improve performance, we propose assigning unequal numbers of antennas to sub-antenna arrays. In the UESA architecture, the receive antennas are dynamically connected to RF chains via a switching network, as illustrated in Fig. \ref{fig:DynamicUESA}. The switching network allows the UESA scheme to adaptively assign antennas to sub-antenna arrays or, equivalently, to adaptively connect antennas to RF chains. The feasibility of switch-based analog beamforming has been widely considered in the literature. For switching, tunable switches \cite{mendez2016hybrid, bogale2016number, alkhateeb2016massive} or digital chips \cite{zhu2016adaptive} with low power consumption, low hardware costs, and high turning speed \cite{schmid2014analysis} can be used.

	Let $m_n$ be the number of antennas assigned to the $n$th sub-antenna array, which is in the $n$th iteration of Algorithm 1. We then have $m_n \geq 1$ and $\sum_{n=1}^{N} m_n = N_r$. Therefore, in the UESA architecture, $m_n$ is a value satisfying $1 \leq m_n \leq N_r-N+1$. For simplicity, we define $\comM$ as the set of numbers of antennas assigned to sub-arrays one to $N$, i.e., $\comM = \left\{m_1, m_2, \ldots, m_N\right\}$.
	
	\subsection{Design of the UESA architecture}
	In the following analysis, to distinguish the proposed UESA system from the conventional ESA system, the largest eigenvalue of $\mT_n$ in the UESA system is denoted by $\mu_1(n)$. In a similar manner to \eqref{rate_5} and \eqref{rate_8}, we obtain the upper bounds of the total achievable rate of the proposed UESA architecture $R_{\text{UESA}}$ as follows:
	\begin{align*}
		R_{\text{UESA}} \leq R_{\text{UESA}}^{\text{ub},1} = \sum_{n=1}^{N} \log_2 \left(1 + \rho \um{n}\right), \numberthis \label{ub_uesa_1}
	\end{align*}
	and
	\begin{align*}
		R_{\text{UESA}}^{\text{ub},1} 
		\leq  R_{\text{UESA}}^{\text{ub}}
		= N \log_2 \left(1 + \frac{\rho}{N} \sum_{n=1}^{N}  \um{n} \right). \numberthis \label{ub_uesa}
	\end{align*}
	The equality in \eqref{ub_uesa_1} is obtained when $\vw_n^{\star} = \vu_n^{\star}, n=1,2,\ldots, N$ and that in \eqref{ub_uesa} is obtained when $\um{1} = \um{2} = \ldots = \um{N}$. To optimize the sum rate, based on \eqref{ub_uesa}, we design the UESA architecture such that $R_{\text{UESA}}^{\text{ub},1} \approx R_{\text{UESA}}^{\text{ub}}$, i.e., 
	\begin{align*}	
		\um{1} \approx \um{2} \approx \ldots \approx \um{N}. \numberthis \label{mu_equal}
	\end{align*}
	At the same time, $R_{\text{UESA}}^{\text{ub}}$ should be maximized, and this can be achieved by maximizing $\sum_{n=1}^{N}  \um{n}$. To achieve these objectives, we need to optimize $\um{n}$, $n=1,2,\ldots,N$. However, it is difficult to directly optimize them, because of which we consider ${\tr{\mT_n}}$ for optimization.
	
	In particular, considering that $\um{n}$ is the largest eigenvalue of $\mT_n$, we relax \eqref{mu_equal} to 
	\begin{align*}
		\tr{\mT_1} \approx \tr{\mT_2} \approx \ldots \approx {\tr{\mT_N}}. \numberthis \label{trace_equal}
	\end{align*}
	The simulation results in Section \ref{sec:simulation_result} numerically verify that by achieving \eqref{trace_equal}, $\um{n}$, $n=1,2,\ldots,N$, can approach condition \eqref{mu_equal}.
	
	Similar to \eqref{trace_equal}, because it is difficult to directly maximize $\sum_{n=1}^{N}  \um{n}$, we instead maximize its upper bound $\sum_{n=1}^{N} {\tr{\mT_n}}$. Consequently, we consider the following design objectives for the UESA:
	\begin{itemize}
		\item $\mathcal{O}_1: {\tr{\mT_1}} \approx \ldots \approx {\tr{\mT_N}}$
		\item $\mathcal{O}_2: \sum_{n=1}^{N} {\tr{\mT_n}}$ is maximized
	\end{itemize}
	\vspace{0.1cm}
	\subsubsection{UESA design for $\mathcal{O}_1$}
	As discussed in Remark 1, it is unlikely that objective $\mathcal{O}_1$ is satisfied in the conventional ESA system because $M$ is a constant value in \eqref{tr_3}. Therefore, in the UESA system, we propose assigning different numbers of antennas to sub-antenna arrays. In other words, $m_1, m_2, \ldots, m_N$ are not necessarily the same, and for a given set $\comM = \left\{m_1, m_2, \ldots, m_N\right\}$, the following theorem suggests a constraint on $\comM$ for optimizing the achievable rate.

	\begin{theorem}
		\label{theorem:m_increase}
		To satisfy the objective $\mathcal{O}_1$, $\comM$ should be arranged in non-decreasing order, i.e., $m_1 \leq m_2 \leq \ldots \leq m_N$.
	\end{theorem}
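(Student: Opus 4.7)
The plan is to mimic the structure of the proof of Lemma 2, but now with $m_n$ playing the role that $M$ had in the ESA case, and then read off the monotonicity requirement on $\comM$ directly from the resulting expression.

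First I would write $\tr{\mT_n} = \tr{\mH_n \mQ_{n-1}^{-1} \mH_n^H} = \tr{\mQ_{n-1}^{-1} \mH_n^H \mH_n}$, and use the fact that $\mH_n$ consists of $m_n$ rows of the channel matrix whose entries are i.i.d.\ (as in the model assumed for \eqref{tr_3} in the proof of Lemma 2). Taking expectations of $\mH_n^H \mH_n$ produces a scalar multiple of $\mI_K$ proportional to $m_n$, so that (in expectation, or asymptotically in $K$) we get a relation of the form
\begin{align*}
	\tr{\mT_n} \;\propto\; m_n \cdot \tr{\mQ_{n-1}^{-1}}.
\end{align*}
This is exactly the ESA identity with $M$ replaced by $m_n$; if I can re-use the derivation used for Lemma 2 essentially verbatim, this step is almost free.

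Next I would argue that $\tr{\mQ_{n-1}^{-1}}$ is non-increasing in $n$. From the recursion $\mQ_{n-1} = \mQ_{n-2} + \rho\,\mG_{n-1}$ with $\mG_{n-1} = \mH_{n-1}^H \vw_{n-1} \vw_{n-1}^H \mH_{n-1} \succeq \mathbf{0}$, we get $\mQ_{n-1} \succeq \mQ_{n-2}$, hence $\mQ_{n-1}^{-1} \preceq \mQ_{n-2}^{-1}$, and therefore $\tr{\mQ_{n-1}^{-1}} \leq \tr{\mQ_{n-2}^{-1}}$. Combined with the previous step, this says that $\tr{\mT_n}$ factorises into a product of a term $m_n$ that we can control and a term $\tr{\mQ_{n-1}^{-1}}$ that is forced to be non-increasing in $n$ by the dynamics of the algorithm.

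To finish, I would impose the design goal $\mathcal{O}_1$, namely $\tr{\mT_1} \approx \tr{\mT_2} \approx \cdots \approx \tr{\mT_N}$. Substituting the proportionality from the first step gives
\begin{align*}
	m_1 \cdot \tr{\mQ_0^{-1}} \;\approx\; m_2 \cdot \tr{\mQ_1^{-1}} \;\approx\; \cdots \;\approx\; m_N \cdot \tr{\mQ_{N-1}^{-1}}.
\end{align*}
Since the factors $\tr{\mQ_{n-1}^{-1}}$ form a non-increasing sequence in $n$, the only way the products can stay approximately constant is for $m_n$ to be non-decreasing in $n$, which is the claim $m_1 \leq m_2 \leq \cdots \leq m_N$. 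The main obstacle I expect is making the first step rigorous: the identity $\tr{\mT_n} \propto m_n \cdot \tr{\mQ_{n-1}^{-1}}$ really holds only in expectation (or with high probability for large $K$), so the theorem should be read as a design principle consistent with the statistical behaviour of $\mH$, exactly in the same sense as Lemma 2. If the authors have already absorbed this subtlety into the statement of Lemma 2, then inheriting it here causes no extra difficulty; otherwise, a short remark clarifying the averaging (or an asymptotic regime in $K$) would be needed to turn the proportionality into a clean argument.
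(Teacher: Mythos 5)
Your core logic is the same as the paper's: approximate $\tr{\mT_n} \approx m_n \tr{\mQ_{n-1}^{-1}}$, observe that $\tr{\mQ_{n-1}^{-1}}$ is non-increasing in $n$, and conclude that equalizing the products forces $m_1 \leq m_2 \leq \ldots \leq m_N$. Your route to the monotonicity of $\tr{\mQ_{n-1}^{-1}}$ via the Loewner order ($\mQ_{n-1} \succeq \mQ_{n-2}$ implies $\mQ_{n-1}^{-1} \preceq \mQ_{n-2}^{-1}$) is actually cleaner than the paper's, which reuses the eigenvalue-based trace inequality from the proof of Lemma~\ref{lemma:trace_Tn_decrease}; your version also sidesteps the dubious claim made there that $\lambda_{min}\left(\mQ_{n-1}^{-1}\mG_n\right) > 0$ for a rank-one $\mG_n$.

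The substantive difference is the point you flag as ``the main obstacle'' and then leave open. In the UESA setting $m_1$ can a priori be as small as $1$, in which case the hardening approximation $\mH_1^H\mH_1 \approx m_1 \mI_K$ fails, and the factorization $\tr{\mT_1} \approx m_1\tr{\mQ_0^{-1}}$ fails with it. The bulk of the paper's Appendix~\ref{appendix:proof_of_theorem1} is devoted to closing exactly this gap: using the Sherman--Morrison update of $\mQ_{n-1}^{-1}$ it shows that each decrement $\Delta_{n-1} = \tr{\mQ_{n-2}^{-1}} - \tr{\mQ_{n-1}^{-1}}$ is less than $1$, hence $\tr{\mQ_{N-1}^{-1}} > 1$; combining this with $\sum_{n} m_n = N_r$ under the equalization condition then yields $m_1 > N_r/N^2$, which grows with $N_r$ for fixed $N$ and so licenses the hardening approximation for every sub-array, including the smallest. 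Your proposal contains none of this, so as written it is a design heuristic rather than a proof at the paper's level of detail. A minor related slip: your parenthetical ``asymptotically in $K$'' points at the wrong regime --- concentration of $\mH_n^H\mH_n$ around $m_n\mI_K$ requires $m_n \gg K$, i.e., large $N_r$ with $K$ fixed, which is precisely why the lower bound on $m_1$ is the crux of the argument.
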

	
	\begin{proof}
		See Appendix \ref{appendix:proof_of_theorem1} \epr
	\end{proof}
	
	Antenna allocation based on Theorem 1 can yield objective $\mathcal{O}_1$, which makes $R_{\text{UESA}}$ closer to its upper bound $R_{\text{UESA}}^{\text{ub}}$ and yields the enhanced achievable rate of the UESA system. This also implies that $R_{\text{UESA}}^{\text{ub}}$ becomes a tighter upper bound for the total achievable rate compared with $R_{\text{ESA}}^{\text{ub}}$. For further performance improvement, under constraint $m_1 \leq m_2 \leq \ldots \leq m_N$, we consider the second design objective $\mathcal{O}_2$.
	
	\vspace{0.1cm}
	\subsubsection{UESA design for $\mathcal{O}_2$}
	
	To achieve $\mathcal{O}_2$, we consider the following theorem:
	
	\begin{theorem}
		\label{theorem:channel_ordering}
		When $N_r$ grows while $K$ is kept constant, $\sum_{n=1}^{N} \tr{\mT_n}$ can be maximized by sorting the rows of $\mH$ in decreasing order of their norms.
	\end{theorem}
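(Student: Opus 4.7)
The plan is to exploit the monotonicity of $\{\mQ_{n-1}^{-1}\}$ in the Loewner order and, in the asymptotic regime where $N_r$ is large relative to $K$, reduce the problem to a rearrangement-inequality argument over the squared row norms of $\mH$.

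First, I would rewrite the objective using the cyclic trace identity as
\eq{
\sum_{n=1}^{N} \tr{\mT_n} = \sum_{n=1}^{N} \tr{\mQ_{n-1}^{-1} \mH_n^H \mH_n}.
}
From \eqref{Q_n_1}, $\mQ_{n-1} = \mI_K + \rho \sum_{k=1}^{n-1} \mH_k^H \vw_k \vw_k^H \mH_k$, so $\{\mQ_{n-1}\}$ is a non-decreasing sequence in the Loewner order, and hence $\mI_K = \mQ_0^{-1} \succeq \mQ_1^{-1} \succeq \cdots \succeq \mQ_{N-1}^{-1}$. This establishes that the $n$th term of $\sum_n \tr{\mT_n}$ is weighted by an operator that shrinks with $n$, but it does not yet separate this shrinkage from the particular choice of rows assigned to $\mH_n$.

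Next, I would argue that in the large-$N_r$ regime with $K$ fixed, the rank-one updates accumulated in $\mQ_{n-1}$ spread across all $K$ directions in $\setC^K$ roughly uniformly, so $\mQ_{n-1}^{-1}$ concentrates around a scalar matrix $c_{n-1}\mI_K$ with $1 = c_0 > c_1 > \cdots > c_{N-1} > 0$; this is the same averaging mechanism that underlies Lemma \ref{lemma:trace_Tn_decrease}, now extended from the equal-$M$ case to arbitrary $m_n$. Substituting this approximation gives
\eq{
\sum_{n=1}^{N} \tr{\mT_n} \approx \sum_{n=1}^{N} c_{n-1} \, \norm{\mH_n}^2 = \sum_{n=1}^{N} c_{n-1} \sum_{i \in \mH_n} \norm{\vh^{(i)}}^2,
}
where $\vh^{(i)}$ denotes the $i$th row of $\mH$ in its assigned position. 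To leading order the weights $c_{n-1}$ depend only on the aggregate channel statistics, not on which particular rows were placed in each sub-array.

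Finally, since $c_0 > c_1 > \cdots > c_{N-1}$, the rearrangement inequality yields that the weighted sum above is maximized exactly when the largest squared row norms are placed in the earliest sub-arrays. Combined with the non-decreasing allocation $m_1 \leq m_2 \leq \cdots \leq m_N$ supplied by Theorem \ref{theorem:m_increase}, this means that sorting the rows of $\mH$ in decreasing order of $\norm{\vh^{(i)}}$ and filling $\mH_1, \mH_2, \ldots, \mH_N$ sequentially maximizes $\sum_n \tr{\mT_n}$. The hard part is rigorously justifying the scalar concentration of $\mQ_{n-1}^{-1}$: without the $N_r \to \infty$ assumption, $\mQ_{n-1}^{-1}$ retains a directional dependence on exactly which rows were used in sub-arrays $1, \ldots, n-1$, and a naive row-swap argument would have to track the perturbation of every downstream $\mQ_{n'}$. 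I expect the appendix to sidestep this by leveraging $N_r \gg K$ so that the off-scalar perturbations become subleading relative to the dominant scalar term, leaving the clean rearrangement conclusion intact.
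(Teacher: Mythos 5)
Your overall strategy coincides with the paper's: both reduce $\sum_{n}\tr{\mT_n}=\sum_n\tr{\mQ_{n-1}^{-1}\mH_n^H\mH_n}$ to a sum of the form $\sum_n w_{n-1}\,\norm{\mH_n}^2$ with weights that decrease in $n$ and are (argued to be) insensitive to the row ordering, and then conclude by a rearrangement argument; the paper writes this as the inner product $\boldsymbol{\phi}_{\vh}^T\boldsymbol{\phi}_{\mQ}$ with $\boldsymbol{\phi}_{\mQ}=[\tr{\mQ_0^{-1}},\ldots,\tr{\mQ_{N-1}^{-1}}]^T$, establishes the monotonicity of the weights in its Appendix~B, and defers the order-insensitivity to Appendix~D, exactly the two ingredients you identify as necessary.

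However, the mechanism you use to obtain the reduction is wrong, and you have correctly flagged the weak point without being able to repair it. The claim $\mQ_{n-1}^{-1}\approx c_{n-1}\mI_K$ cannot hold: $\mQ_{n-1}=\mI_K+\rho\sum_{k=1}^{n-1}\mG_k$ where each $\mG_k=\mH_k^H\vw_k\vw_k^H\mH_k$ is rank one, so $\mQ_{n-1}$ differs from the identity by a matrix of rank at most $n-1\leq N-1$. There is no averaging over $N_r$ here --- the number of rank-one updates is tied to $N$, not to $N_r$ --- and in the paper's own setting $N=K$, so e.g.\ $\mQ_1^{-1}$ has $K-1$ eigenvalues equal to $1$ and one eigenvalue $\left(1+\rho\norm{\mH_1^H\vw_1}^2\right)^{-1}\ll 1$, which is as far from a scalar matrix as possible. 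The correct fix, and what the paper actually does, is to scalarize the \emph{other} factor: for $m_n\gg K$ one has $\mH_n^H\mH_n\approx\mathrm{diag}\left\{\norm{\vh_1^{(n)}}^2,\ldots,\norm{\vh_K^{(n)}}^2\right\}$ with all diagonal entries approximately equal, hence $\mH_n^H\mH_n\approx\norm{\vh^{(n)}}^2\mI_K$ and $\tr{\mQ_{n-1}^{-1}\mH_n^H\mH_n}\approx\norm{\vh^{(n)}}^2\tr{\mQ_{n-1}^{-1}}$ regardless of the eigenstructure of $\mQ_{n-1}^{-1}$. This recovers exactly your weighted sum with $c_{n-1}=\tr{\mQ_{n-1}^{-1}}/K$, after which your rearrangement step goes through as written. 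So the final formula and conclusion are right, but the intermediate concentration claim you rest them on is false and must be replaced by the concentration of $\mH_n^H\mH_n$.
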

	
	\begin{proof}
		In a UESA architecture, $\mH_n$ has size $m_n \times K$. If $N_r$ grows while $K$ is fixed in a massive MIMO system, $m_n$ becomes much larger than $K$. Then, we have \cite{lu2014overview}
		\begin{align*}
			{\mH_n^H \mH_n} \approx \text{diag} \left\{{\norm{\vh_{1}^{(n)}}}^2, {\norm{\vh_{2}^{(n)}}}^2, \ldots, {\norm{\vh_{K}^{(n)}}}^2 \right\},
		\end{align*}
		which is a diagonal matrix of size $K \times K$ with ${\norm{\vh_{1}^{(n)}}}^2, {\norm{\vh_{2}^{(n)}}}^2, \ldots, {\norm{\vh_{K}^{(n)}}}^2 $ on the main diagonal, where $\vh_{k}^{(n)}$ is the $k$th column of $\mH_n$. Therefore, \eqref{tr_2} can be rewritten as
		\begin{align*}
			{\tr{\mT_n}} 
			&= \sum_{k=1}^{K} {{\norm{\vh_{k}^{(n)}}}^2} {q_{k,k}^{(n-1)}}\\ 
			&\approx {{\norm{\vh^{(n)}}}^2} \sum_{k=1}^{K} {q_{k,k}^{(n-1)}} \numberthis \label{mean_tr_11} \\
			&= {{\norm{\vh^{(n)}}}^2} {\tr{\mQ_{n-1}^{-1}}} ,  \numberthis \label{mean_tr_2}
		\end{align*}
		where $q_{k,k}^{(n-1)}$ is the $k$th diagonal element of $\mQ_{n-1}^{-1}$. Because ${\norm{\vh_{k}^{(n)}}}^2$ is approximately the same for all $k$, we can write ${{\norm{\vh_{k}^{(n)}}}^2} \approx {\norm{\vh^{(n)}}}^2, k = 1,2,\ldots,K$, and with the note that ${\tr{\mQ_{n-1}^{-1}}} = \sum_{k=1}^{K} {q_{k,k}^{(n)}}$, we obtain \eqref{mean_tr_11} and \eqref{mean_tr_2}. 
		Letting
		\begin{align*}
			\boldsymbol{\phi}_{\vh} 
			&= \left[{{\norm{\vh^{(1)}}}^2}, \ldots, {{\norm{\vh^{(N)}}}^2}\right]^T,\\
			\boldsymbol{\phi}_{\mQ} 
			&=  \left[\tr{\mQ_{0}^{-1}}, \ldots, \tr{\mQ_{N-1}^{-1}}\right]^T,
		\end{align*}
		we obtain
		\begin{align*}
			\sum_{n=1}^{N} {\tr{\mT_n}} = \boldsymbol{\phi}_{\vh}^T \boldsymbol{\phi}_{\mQ}. \numberthis \label{sum_mean_tr}
		\end{align*}
		We note that for a given $\boldsymbol{\phi}_{\mQ} $, \eqref{sum_mean_tr} is maximized if $\boldsymbol{\phi}_{\vh}$ and $\boldsymbol{\phi}_{\mQ} $ are in parallel as much as possible. Because ${\tr{\mQ_{n-1}^{-1}}}$ decreases with $n$ as proved in Appendix \ref{appendix:prove_traceTn_decrease}, the elements of $\boldsymbol{\phi}_{\mQ} $ are in decreasing order. Furthermore, in Appendix \ref{appendix:decreasing_rate_with_channel_ordering}, where the decreasing rate of ${\tr{\mQ_{n-1}^{-1}}}$ is considered, we show that it does not significantly change with channel ordering. Therefore, from \eqref{sum_mean_tr}, it can be concluded that $\boldsymbol{\phi}_{\vh}^T \boldsymbol{\phi}_{\mQ}$ can be maximized if the rows of $\mH$ are ordered such that ${{\norm{\vh_{k}^{(n)}}}^2}$ also decreases with $n$. Under the constraint $m_1 \leq m_2 \leq \ldots \leq m_N$, the channel rows are ordered in  decreasing order of their norms. \epr
	\end{proof}
	
	Theorems \ref{theorem:m_increase} and \ref{theorem:channel_ordering} give insights into the design of the UESA architecture that allow for improvements in the total achievable rate of the conventional ESA architecture. However, they do not guarantee the optimal total rate because a large number of combinations $\comM$ satisfies $m_1 \leq m_2 \leq \ldots \leq m_N$. In the subsections below, we propose algorithms to find the optimal number of antennas for each sub-antenna array. Theorems 1 and 2 are used to find the near-optimal numbers of antennas in sub-arrays with low computational complexity.
	
	\subsection{Algorithms for antenna allocation in UESA}
	\subsubsection{UESA with exhaustive search (UESA-ES)}
	Let $\comMs = \left\{m_1^{\star}, m_2^{\star}, \ldots, m_N^{\star}\right\}$ be the set of optimal numbers of antennas for sub-arrays. This provides the highest achievable rate among all possible candidates of $\comM$. Channel ordering based on Theorem \ref{theorem:channel_ordering} is first performed, and the ES algorithm then examines all candidates in
	\begin{align*}
		\mathbb{S} = \left\{\comM: 1 \leq m_n \leq N_r-N+1, \sum_{n=1}^{N} m_n = N_r\right\}
	\end{align*}
	to determine $\comMs$.
	For each candidate, Algorithm \ref{algorithm:factorization_AC} is used to find the analog combining matrix $\mW^{\star}$. Finally, $\comMs$ is set to the candidate that provides the highest sum rate. Then, $\abs{\mathbb{S}}$ becomes very large, especially for massive MIMO systems, and this results in excessively high computational complexity. To resolve this problem, we propose an algorithm that performs search in a subset of $\mathbb{S}$ with much lower complexity.
	
	\subsubsection{UESA with reduced-ES (UESA-RES)}
	
	\begin{algorithm}
		\caption{Analog combining with the UESA-RES algorithm}
		\label{algorithm:RES}
		\begin{algorithmic}[1]
			\REQUIRE $\mathbb{S}_r$
			\ENSURE $\mW_{\text{UESA-RES}}^{\star},\comM^{\star}$
			
			\STATE{Obtain $\underline{\mH}$ by ordering the channel rows in decreasing order of their norms.}
			\STATE {$\tau = 0$}
			\FOR {$i = 1 \rightarrow \abs{\mathbb{S}_r}$} 
			\STATE {Set $\comM$ to the $i$th candidate in $\mathbb{S}_r$.}
			\STATE {Use Algorithm \ref{algorithm:factorization_AC} to find $\mW_{\text{UESA}}$ for $\comM$ and $\underline{\mH}$.}
			\STATE {$R_{\text{UESA}} = \log_2 \det \left(\mI_K +  \rho \underline{\mH}^H \mW_{\text{UESA}} \mW^H_{\text{UESA}} \underline{\mH} \right)$}
			\IF {$R_{\text{UESA}} > \tau$}
			\STATE {$\mW_{\text{UESA-RES}}^{\star} = \mW_{\text{UESA}}$}
			\STATE {$\comM^{\star} =\comM$}
			\STATE {$\tau = R_{\text{UESA}}$}
			\ENDIF
			\ENDFOR
		\end{algorithmic}
	\end{algorithm}
	
	Based on Theorem \ref{theorem:m_increase}, instead of searching for $\comMs$ in the entire space $\mathbb{S}$, the RES algorithm performs search only in $\mathbb{S}_r$,  a sub-set of $\mathbb{S}$, which features $\comM$ such that $m_1 \leq m_2 \leq \ldots \leq m_N$, i.e.,
	\begin{align*}
		\mathbb{S}_r = \left\{\comM: \comM \in \mathbb{S}, m_1 \leq m_2 \leq \ldots \leq m_N \right\}.
	\end{align*}

	The UESA-RES algorithm is summarized in Algorithm \ref{algorithm:RES}. After ordering the channel rows based on Theorem \ref{theorem:channel_ordering} in step 1, the candidates in $\mathbb{S}_r$ are sequentially tested. Each candidate $\comM$ is used as input to Algorithm \ref{algorithm:factorization_AC} to find the combining matrix and the corresponding total achievable rate as shown in steps 5 and 6, respectively. In steps 7--11, $\mW_{\text{UESA-RES}}^{\star}$ and $\comM^{\star}$ are updated whenever a combining matrix $\mW_{\text{UESA}}$ and $\comM$ with a higher sum rate are found. Simulation results in Section \ref{sec:simulation_result} show that the RES algorithm provides almost identical performance to the ES algorithm while incurring considerably lower computational complexity.
	\vspace{0.2cm}
	\subsubsection{UESA with RES and early termination (UESA-RES-ET)}
	
	In the search for $\comMs$, we first examine $\comM$ with small differences among its elements. Alternatively, we can first examine $\comM$ with large differences among its elements. The optimal search order depends on how fast $\tr{\mQ_{n-1}^{-1} }$ decreases with $n$ in \eqref{m_trace}. Through simulations, we found that $\tr{\mQ_{n-1}^{-1} }$ tends to decrease substantially quickly with $n$, which implies that $\comM$ with large differences between neighboring elements is more likely to be $\comMs$. Based on this observation, in the UESA-RES-ET algorithm, candidates for $\comM$ in $\mathbb{S}_r$ are sorted according to the differences between neighboring elements of $\comM$. Furthermore, if no better candidate is found over a certain number of iterations, the search process is terminated.
	
	The UESA-RES-ET algorithm is summarized in Algorithm \ref{algorithm:UESA_RES_EE}. A vector $\mathbf{\Pi} = \left\{ \pi_1, \ldots, \pi_{\abs{\mathbb{S}_r}} \right\}$ corresponding to $\abs{\mathbb{S}_r}$ candidates in $\mathbb{S}_r$ is first computed. Specifically, for a candidate $\comM = \left\{m_1, \ldots, m_n, m_{n+1}, \ldots, m_N\right\}$, $\pi_i$ is given by $\pi_i = \prod_{n=1}^{N-1} \left(d_n + 1\right)$, as shown in step 5, where $d_n = \abs{m_{n+1} - m_n}, d_n \in \left[0, N_r-N\right]$ is the difference between $m_{n+1}$ and $m_n$, which is computed in step 4\footnote{We employ this form of $\pi_i$ instead of the sum of differences, i.e., $\sum_{n=1}^{N-1} d_n$, because the latter only measures the difference between $m_1$ and $m_N$, i.e., $ \sum_{n=1}^{N-1} d_n = \abs{d_{1} - d_{N-1}}$, and a larger $\sum_{n=1}^{N-1} d_n$ does not imply larger differences among the elements of $\comM$.}. A larger $\pi_i$ does, however, imply larger differences among the elements of $\comM$. 
	As a result, $\comM$ with large $\pi_i$ is considered as a more promising candidate, and should be tested first. Therefore, in step 7, $\underline{\mathbb{S}}_r$ is obtained by sorting the elements of $\mathbb{S}_r$  in descending order of elements of $\mathbf{\Pi}$, and the candidates in $\underline{\mathbb{S}}_r$ are tested one by one to determine $\mW_{\text{UESA-RES-ET}}^{\star}$ and  $\comMs$ in steps 10--26. For early termination, a counter $count$ is used to record the number of iterations for which new candidates do not provide any further performance improvement. For each candidate $\comM$  that does not improve the achievable rate, $count$ increases by one, and is otherwise reset to zero, as shown in steps 16 and 21, respectively. When $count_{max}$ is reached, it is likely that the optimal candidate has already been searched, and thus the search process is  terminated in step 24.
	
	\begin{algorithm}
		\caption{Analog combining with the UESA-RES-ET algorithm}
		\label{algorithm:UESA_RES_EE}
		\begin{algorithmic}[1]
			\REQUIRE $\mathbb{S}_r, count_{max}$
			\ENSURE $\mW_{\text{UESA-RES-ET}}^{\star},\comM^{\star}$
			
			
			\STATE {$\mathbf{\Pi} = [0, 0, \ldots, 0],  \left(\abs{\mathbb{S}_r} \text{zeros}\right)$}
			\FOR {$i = 1 \rightarrow \abs{\mathbb{S}_r}$}
			\STATE {Set $\comM = \left\{m_1, m_2, \ldots, m_N \right\}$ to the $i$th candidate in $\mathbb{S}_r$.}
			
			\STATE {Compute $\left\{d_1, d_2, \ldots, d_{N-1}\right\}$ with $d_n = \abs{m_{n+1} - m_n}$.}
			
			\STATE {Compute $\pi_i = \prod_{n=1}^{N-1} \left(d_n + 1\right)$ and set it to the $i$th element of $\mathbf{\Pi}$.}
			\ENDFOR
			\STATE {Obtain $\underline{\mathbb{S}}_r$ by sorting the candidates in $\mathbb{S}_r$ in descending order of the elements of $\mathbf{\Pi}$.}
			
			\STATE{Obtain $\underline{\mH}$ by ordering the channel rows in descending order of their norms.}
			\STATE {$\tau = 0, count = 0$}
			\FOR {$i = 1 \rightarrow \abs{\underline{\mathbb{S}}_r}$} 
			\IF {$count < count_{max}$}
			\STATE {Set $\comM$ to the $i$th candidate in $\underline{\mathbb{S}}_r$.}
			\STATE {Use Algorithm \ref{algorithm:factorization_AC} to find $\mW_{\text{UESA}}$ for $\comM$ and $\underline{\mH}$.}
			\STATE {$R_{\text{UESA}} = \log_2 \det \left(\mI_K +  \rho \underline{\mH}^H \mW_{\text{UESA}} \mW^H_{\text{UESA}} \underline{\mH} \right)$}
			\IF {$R_{\text{UESA}} < \tau$}
			\STATE {$count = count + 1$}
			\ELSE
			\STATE {$\mW_{\text{UESA-RES-ET}}^{\star} = \mW_{\text{UESA}}$}
			\STATE {$\comM^{\star} =\comM$}
			\STATE {$\tau = R_{\text{UESA}}$}
			\STATE {$count = 0$}
			\ENDIF
			\ELSE
			\STATE {Terminate the search process.}
			\ENDIF
			\ENDFOR
		\end{algorithmic}
	\end{algorithm}
	
	\subsubsection{UESA with fast antenna allocation (Fast-UESA)}
	Although UESA-RES and UESA-RES-ET significantly reduce the complexity of antenna allocation, they still need to examine a large number of candidates in the case of large $\abs{\underline{\mathbb{S}}_r}$. In order to further reduce complexity, we propose a fast antenna allocation algorithm for UESA (Fast-UESA).
	
	Let $\comM = \left\{m_1, \ldots, m_N \right\}$ be the first candidate in $\underline{\mathbb{S}}_r$. The Fast-UESA algorithm starts by taking $\comM$ as the initial solution and subsequently updating it based on objectives $\mathcal{O}_1$ and $\mathcal{O}_2$ in order to achieve improved sum rate. Specifically, in order to increase the upper bound of the sum rate $(R_{\text{ESA}}^{\text{ub}})$ and to make it reachable, we update the elements of $\comM$ such that the largest eigenvalues $\mu_1(1), \ldots, \mu_1(N)$ are enhanced while minimizing their differences. Intuitively, if $\mu_1(n)$ is much smaller than the others, more antennas should be assigned to the $n$th sub-array, i.e., $m_n$ should be increased. However, an increase in $m_n$ can make the constraint $\sum_n m_n = N_r$ unsatisfied. To solve this problem, we set the last element of $\comM$ to $m_N = N_r - \sum_{n=1}^{N-1} m_n$. The motivation for this approach is that in $\comM$, the last element $m_N$ is typically much larger than the others. We recall that $\comM$ is the first candidate in $\underline{\mathbb{S}}_r$ and that it has maximal distances between elements, as described in the UESA-RES-ET algorithm. For example, for $N_r=64, N=4$, we have $\comM = \left\{1,7,17,39\right\}$. An increase of $m_1$ by one leads to a decrease of $m_N=39$ by one, which can significantly improve the sum rate of the first sub-array but does not significantly affect the sum rate of the last sub-array.
	
	The Algorithm \ref{algorithm:fast-uesa} summarizes the Fast-UESA algorithm with the input of the first element in $\underline{\mathbb{S}}_r$ and the ordered channel matrix. Steps 1--3 initialize the algorithm by finding $\mW_{\text{UESA}}$ and $\mu_1(1), \ldots, \mu_1(N)$, computing the initial sum rate $\tau$, and assigning $\comM$ to $\comM^{\star}$. Elements of $\comM$ are then updated over $I$ iterations, as presented in steps 5--26. First, in step 6, $\Delta_n, n=1, \ldots, N$, are computed. These metrics measure the differences between $\mu_1(n), n=1, \ldots, N$, and the average value of $\left\{\mu_1(1), \ldots, \mu_1(N)\right\}$; these differences affect objective $\mathcal{O}_1$. For example, if $\Delta_n \ll 0$ and $\Delta_k \approx 0, k \neq n$, $\mu_1(n)$ becomes much smaller than $\mu_1(k), k \neq n$. In that case, it will be difficult to guarantee objectives $\mathcal{O}_1$ and $\mathcal{O}_2$. Therefore, we compare $\Delta_n$ to a predefined threshold $\gamma$, which is set in step 4, to decide whether more antennas should be assigned to the $n$th sub-array, as shown in steps 8--10. In step 11, $m_N$ is adjusted to guarantee that the constraint $\sum_{n=1}^{N-1} m_n = N_r$ is satisfied. In steps 14--21, only those $\comM$ that are member of $\underline{\mathbb{S}}_r$ are examined. Specifically, in step 15, we obtain $\mu_1(1), \ldots, \mu_1(N)$ based on Algorithm 1, which are then used in step 16 as condition for updating $\comM^{\star}$. This condition guarantees that the updated $\comM^{\star}$ will achieve larger $\sum_{n=1}^{N} \mu_1(n)$, which affects the upper bound of the total achievable rate, as discussed in Section IV-A. We note here that this step is different from the corresponding steps in the UESA-RES and UESA-RES-ET algorithms, which check the total achievable rate for updating $\comM^{\star}$. This contributes to a reduction in the complexity of the Fast-UESA algorithm. In step 23, the update process is terminated early in the case that $m_N < m_{N-1}$. Finally, $\mW_{\text{Fast-UESA}}^{\star}$ and $\comM^{\star}$ become the best ones found so far.
	
	\begin{algorithm}
		\caption{Analog combining with Fast-UESA}
		\label{algorithm:fast-uesa}
		\begin{algorithmic}[1]
			\REQUIRE The first candidate $\comM = \left[m_1, m_2, \ldots, m_N \right]$ of $\underline{\mathbb{S}}_r$, and the ordered channel matrix $\underline{\mH}$
			\ENSURE $\mW_{\text{Fast-UESA}}^{\star},\comM^{\star}$
			\STATE {Use Algorithm \ref{algorithm:factorization_AC} to find ${\mu_1(n)}, n=1,\ldots,N$ and $\mW_{\text{UESA}}$ for $\comM$ and $\underline{\mH}$.}
			
			\STATE {$\tau = \sum_{n=1}^{N} \mu_1(n)$}
			\STATE {$\comM^{\star} = \comM$}

			\STATE {Set a threshold $\gamma$.}
			
			\FOR {$i = 1 \rightarrow I$}
				
				\STATE {$\Delta_n = \mu_1(n) - \frac{1}{N} \sum_{n} \mu_1(n), n=1,\ldots,N$}
				
				\FOR {$n = 1 \rightarrow N$}
				
					\IF {$\Delta_n < \gamma$}
						\STATE {$m_n = m_n + 1$}
					\ENDIF 
					\STATE {$m_N = N_r - \sum_{n=1}^{N-1} m_n$}
					
					\IF {$m_N > m_{N-1}$}
						\STATE {Update $\comM = \left[m_1, m_2, \ldots, m_N \right]$.}
						\IF {$\comM \in \underline{\mathbb{S}}_r$}
							\STATE {Use Algorithm \ref{algorithm:factorization_AC} to find ${\mu_1(n)}, n=1,\ldots,N$ and $\mW_{\text{UESA}}$ for $\comM$ and $\underline{\mH}$.}

							\IF {$\sum_{n=1}^{N} \mu_1(n) > \tau$}
								\STATE {$\comM^{\star} =\comM$}
								\STATE {$\mW_{\text{Fast-UESA}}^{\star} = {\mW}_{\text{UESA}}$}
								\STATE {$\tau = R_{\text{UESA}}$}
								
							\ENDIF 
						\ENDIF
					\ELSE 
					\STATE {Break to terminate the updating process.}
				\ENDIF
			\ENDFOR

		\ENDFOR
		\end{algorithmic}
	\end{algorithm}

	\subsection{Power consumption and hardware cost}
	
	In the ESA and UESA architectures illustrated in Figs. \ref{fig:ESA_model} and \ref{fig:DynamicUESA}, respectively, each receive antenna needs a low-noise amplifier (LNA) and a phase shifter while each RF chains requires an ADC. Let $P_{\text{LNA}}, P_{\text{PS}}, P_{\text{SW}},  P_{\text{RF}}$, and $P_{\text{ADC}}$ be the power consumed by the LNA, phase shifter, switch, RF chain, and ADC, respectively. Then, the total power consumed by the ESA and UESA architectures can be expressed as 
	\begin{align*}
		&P_{\text {\text{ESA}}} = N_r \left(P_{\text{LNA}}+P_{\text{PS}}\right) + N \left(P_{\text{RF}} + P_{\text{ADC}}\right) \numberthis \label{P_ESA}
	\end{align*}
	and
	\begin{align*}
		P_{\text {\text{UESA}}} &= N_r \left(P_{\text{LNA}}+P_{\text{PS}}\right) + N \left(P_{\text{RF}} + P_{\text{ADC}}\right) + N_r P_{\text{SW}}, \numberthis \label{P_UESA}
	\end{align*}
	respectively. It is observed from \eqref{P_ESA} and \eqref{P_UESA} that the UESA system additionally requires power consumption  of $N_r P_{\text{SW}}$ compared with the conventional ESA system due to the use of a switching network. However, because the power consumed by switches is small  \cite{mendez2016hybrid, celik2006implementation, alkhateeb2016massive, mendez2015channel, schmid2014analysis}, the corresponding increases in power consumption by the UESA scheme are not significant. Specifically, \cite{mendez2016hybrid} and \cite{alkhateeb2016massive} demonstrate that the power consumption of a switch is six times lower than that of a phase shifter and 40 times lower than that of an ADC block. Therefore, the power consumption increases due to the switching network do not significantly affect the total power consumption of the sub-connected architecture. We further investigate this issue in the next section.
	
	The hardware cost of a receiver significantly depends on the number of RF chains. The proposed UESA architecture does not require any additional RF chains compared with the conventional ESA architecture. Furthermore, switches are low-cost devices\cite{mendez2016hybrid, celik2006implementation}, and this fact results in the relatively low cost of the switching network in the proposed UESA architecture. Therefore, the proposed UESA architecture can still be a cost-effective architecture for hybrid beamforming in massive MIMO systems.
	
	\section{Simulation results}
	\label{sec:simulation_result}
	In this section, we numerically evaluate the achievable rate, power consumption, energy efficiency, and computational complexities of the proposed UESA architecture. In simulations, the channel coefficients between each MS and the BS are generated based on the geometric Saleh--Valenzuela channel model,  which is a typical channel model for millimeter-wave communication systems \cite{han2015large, lee2015hybrid, chen2015iterative, gao2016energy}. Specifically, the channel vector between the BS and the $k$th MS can be expressed as \cite{bogale2016number, alkhateeb2015limited, choi2016analog}
	\begin{align}
		\label{channel_model}
		\vh_k = \sqrt{\frac{N_r}{L_k}} \sum_{l=1}^{L_k} \alpha_{k,l}  \va_{BS} (\phi_{k,l}), 
	\end{align}
	where $L_k$ is the number of effective channel paths corresponding to a limited number of scatters between the BS and the $k$th MS, $\alpha_{k,l}$ and $\phi_{k,l}$ are the gain and the azimuth angle of arrival (AoA) of the $l$th path, respectively. All channel path gains $\alpha_{k,l}$  are assumed to be i.i.d. Gaussian random variables with zero mean and unit variance. Furthermore, $\va_{BS}$ represents the normalized receive array response vector at the BS that depends on the structure of the antenna array. In this work, we considered a uniform linear array (ULA) where the array response vector is given by \cite{gao2016energy, alkhateeb2015limited}
	\begin{align*}
		\va (\phi) = \frac{1}{\sqrt{N_r}} [1, e^{j \frac{2\pi}{\lambda} d \sin(\phi)}, \ldots, e^{j (N_r-1) \frac{2\pi}{\lambda} d \sin(\phi)}]^T, \numberthis \label{array_response}
	\end{align*}
	where $\lambda$ denotes the wavelength of the signal and $d$ is antenna spacing in the antenna array. 
	
	For simplicity, we assume an identical number of effective channel paths between each MS and the BS, which is set to $L_k = 10$ for $k = 1, 2, \ldots, K$ \cite{nguyen2016hybrid, park2017dynamic, el2013multimode, el2014spatially}. The AoAs $\phi_{k,l}$ are assumed to be uniformly distributed in $\left[0; 2\pi\right]$ \cite{nguyen2016hybrid, li2017hybrid}. The ULA model is employed for the receive antenna array at the BS with antenna spacing of half a wavelength, i.e., $\frac{d}{\lambda} = \frac{1}{2}$ in \eqref{array_response} \cite{gao2016energy, bogale2016number}. The phases in the analog combiner are restricted to $ \Theta = \left\{ 0, \frac{2\pi}{Q}, \frac{4\pi}{Q}, \ldots, \frac{2(Q-1)\pi}{Q} \right\}$, where $Q$ is set to 16. Then, the analog combining vector corresponding to the $n$th sub-antenna array is given as $\vw_n = \left[ w_{n}^{(1)}, \ldots, w_{n}^{(M)} \right]^T$, where $w_{n}^{(m)} = r_n e^{j \theta_{n}^{(m)}}$ with $r_n = \frac{1}{\sqrt{M}}$ for the ESA architecture and $r_n = \frac{1}{\sqrt{m_n}}$ for the UESA architecture. The phase of $w_{n}^{(m)}$ is selected from $\Theta$ being closest to the phase of the $m$th element of $\vu_n^{\star}$, as in step 6 of Algorithm 1. Finally, the SNR is defined as the ratio of the average symbol power of a user to noise power $N_0$.
	
	\subsection{Changing properties of the largest eigenvalues of $\mT_n$}
	\begin{figure*}[t]
		\subfigure[$N_r=32, N = K = 4$]
		{
			\includegraphics[scale=0.43]{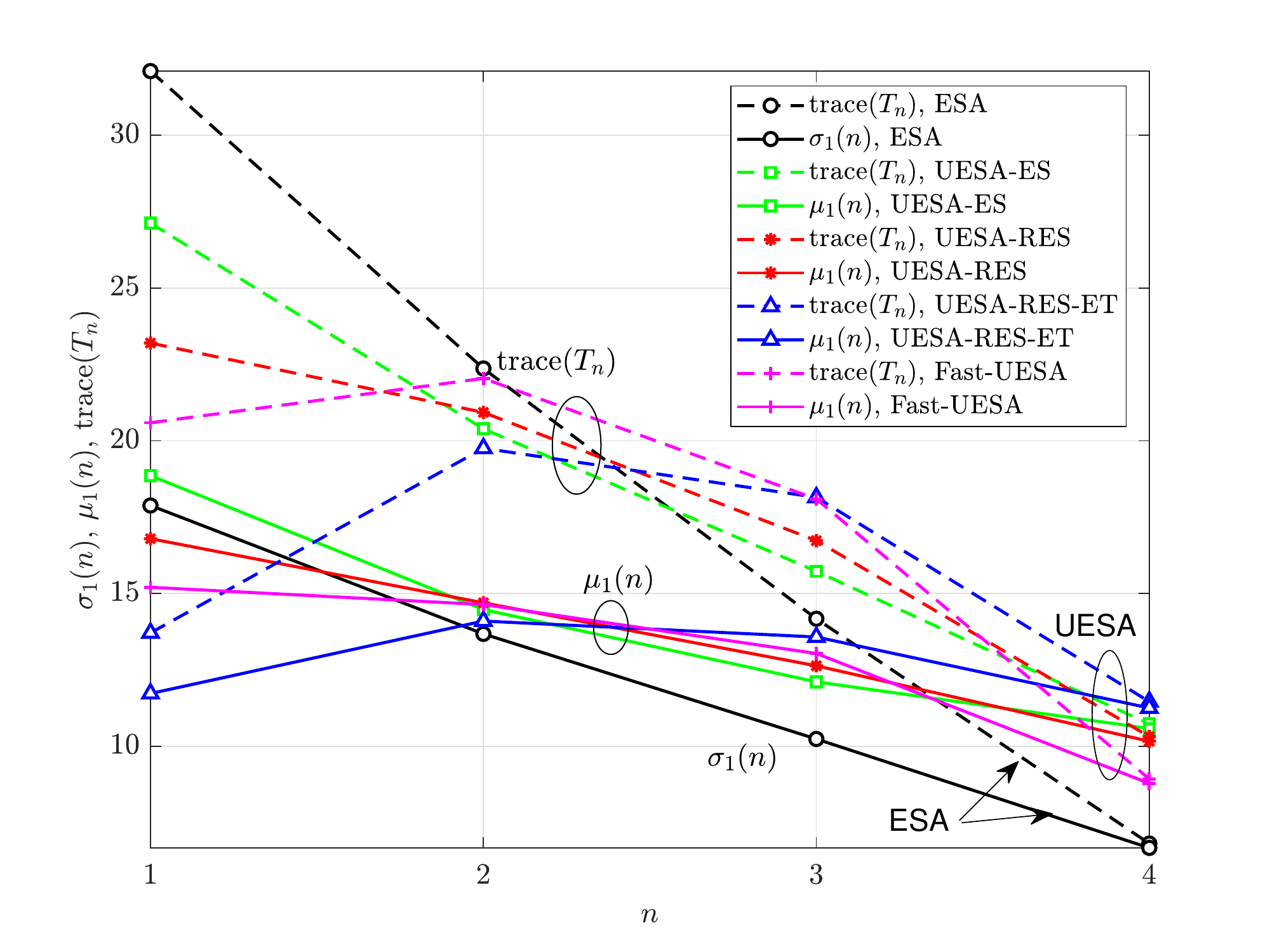}
			\label{fig:sigma_48_3_3}
		}
		\subfigure[$N_r=64, N = K = 4$]
		{
			\includegraphics[scale=0.43]{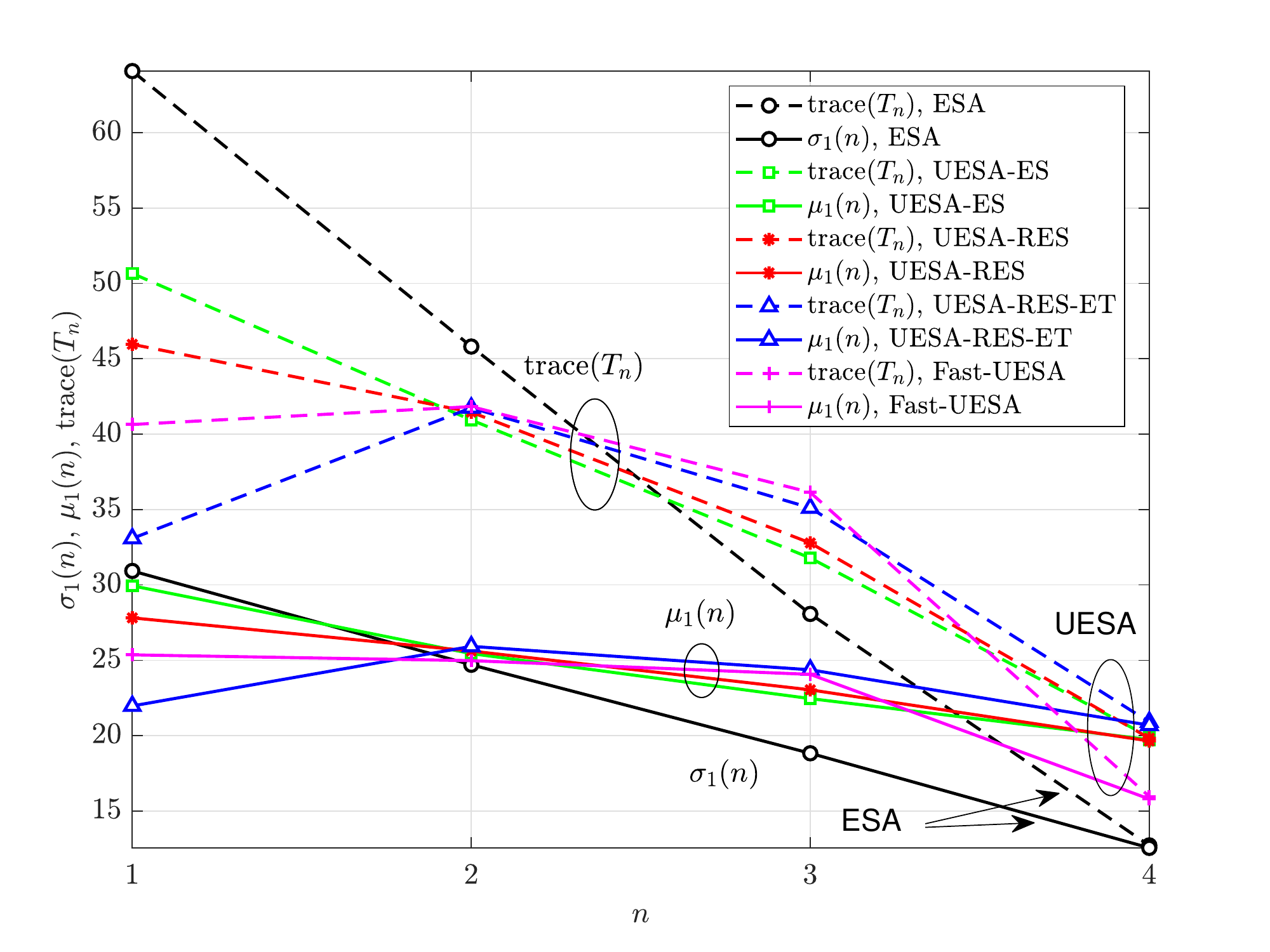}
			\label{fig:sigma_64_4_4}
		}
		\caption{Changing properties of the average largest eigenvalues of $\mT_n$ with SNR = $12$ dB.}
	\end{figure*}
	In Figs. \ref{fig:sigma_48_3_3} and \ref{fig:sigma_64_4_4}, the average largest eigenvalue of $\mT_n$ in the conventional ESA is compared with those of the proposed schemes, namely UESA-ES, UESA-RES, UESA-RES-ET, and Fast-UESA for $N=K=4$ and $N_r=\left\{32, 64\right\}$ at SNR $=12$ dB. We set $count_{max}=\left\{30, 280\right\}$ for the UESA-RES-ET scheme and $I = \left\{20, 40\right\}, \gamma = \left\{ 2, 4 \right\}$ for the Fast-UESA algorithm corresponding to $N_r=\left\{32, 64\right\}$. We also show ${\tr{\mT_n}}$, which is equal to the sum of all eigenvalues of $\mT_n$ in these schemes. From Figs. \ref{fig:sigma_48_3_3} and \ref{fig:sigma_64_4_4}, the following observations are noted:
	\begin{itemize}
		\item It is clear that in  the ESA system, similar to ${\tr{\mT_n}}$, $\sm{n}$ decreases with $n$.
		
		\item By optimizing ${\tr{\mT_n}}$ in the UESA system, the differences among $\um{1}, \um{2}, \ldots, \um{N}$ become smaller, which justifies the use of constraint \eqref{trace_equal} to optimize $\um{n}$.
		
		\item Among the UESA-ES, UESA-RES, and UESA-RES-ET algorithms,  although slightly smaller differences among $\um{1}, \um{2}, \ldots, \um{N}$ is seen in the UESA-RES-ET algorithm, UESA-ES and UESA-RES achieve higher values of $\sum_n \um{n}$, which results in their higher upper bounds of the achievable rate, as will be shown in the next subsection.
	\end{itemize}  

	\subsection{Total achievable rate of the UESA architecture}
	\begin{figure*}[t]
		\subfigure[$N_r=32, N = K = 4$]
		{
			\includegraphics[scale=0.58]{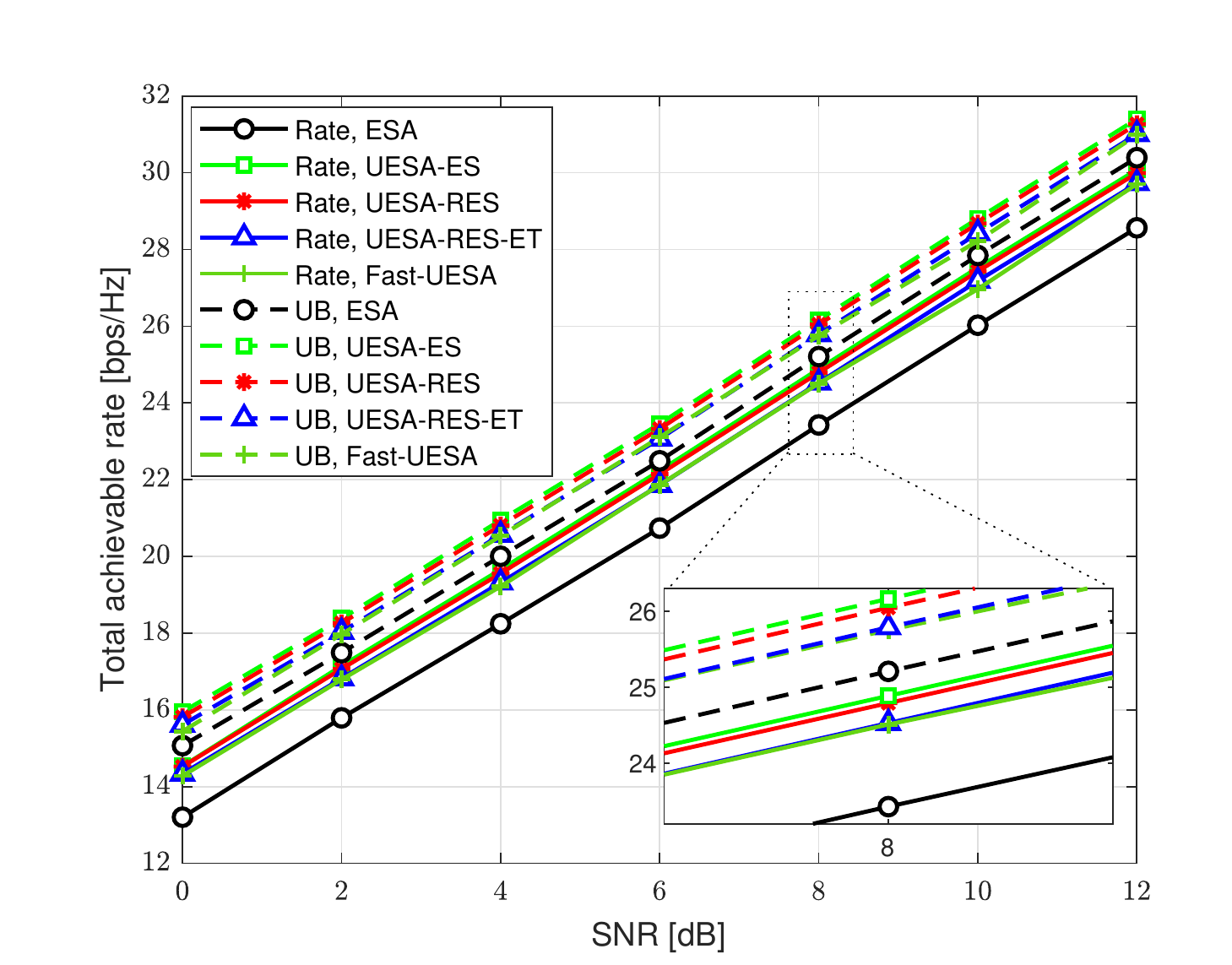}
			\label{fig:rate_48_3}
		}
		\subfigure[$N_r=64, N = K = 4$]
		{
			\includegraphics[scale=0.58]{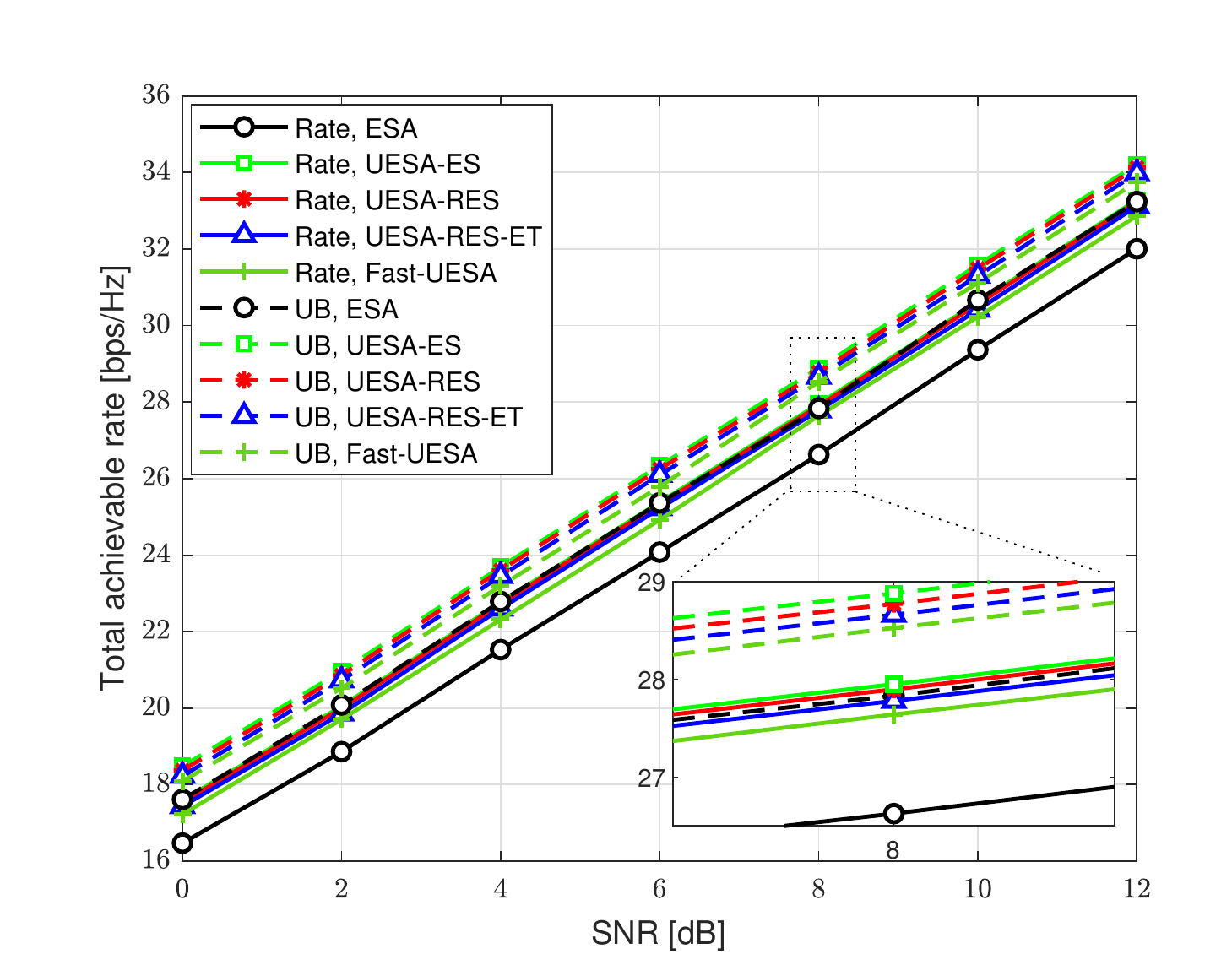}
			\label{fig:rate_64_4_4}
		}
		\caption{Comparison of total achievable rates and their upper bounds  for the conventional ESA, the proposed UESA-ES, UESA-RES, UESA-RES-ET, and Fast-UESA schemes.}
	\end{figure*}

	\begin{figure}[t]
		\centering
		\includegraphics[scale=0.6]{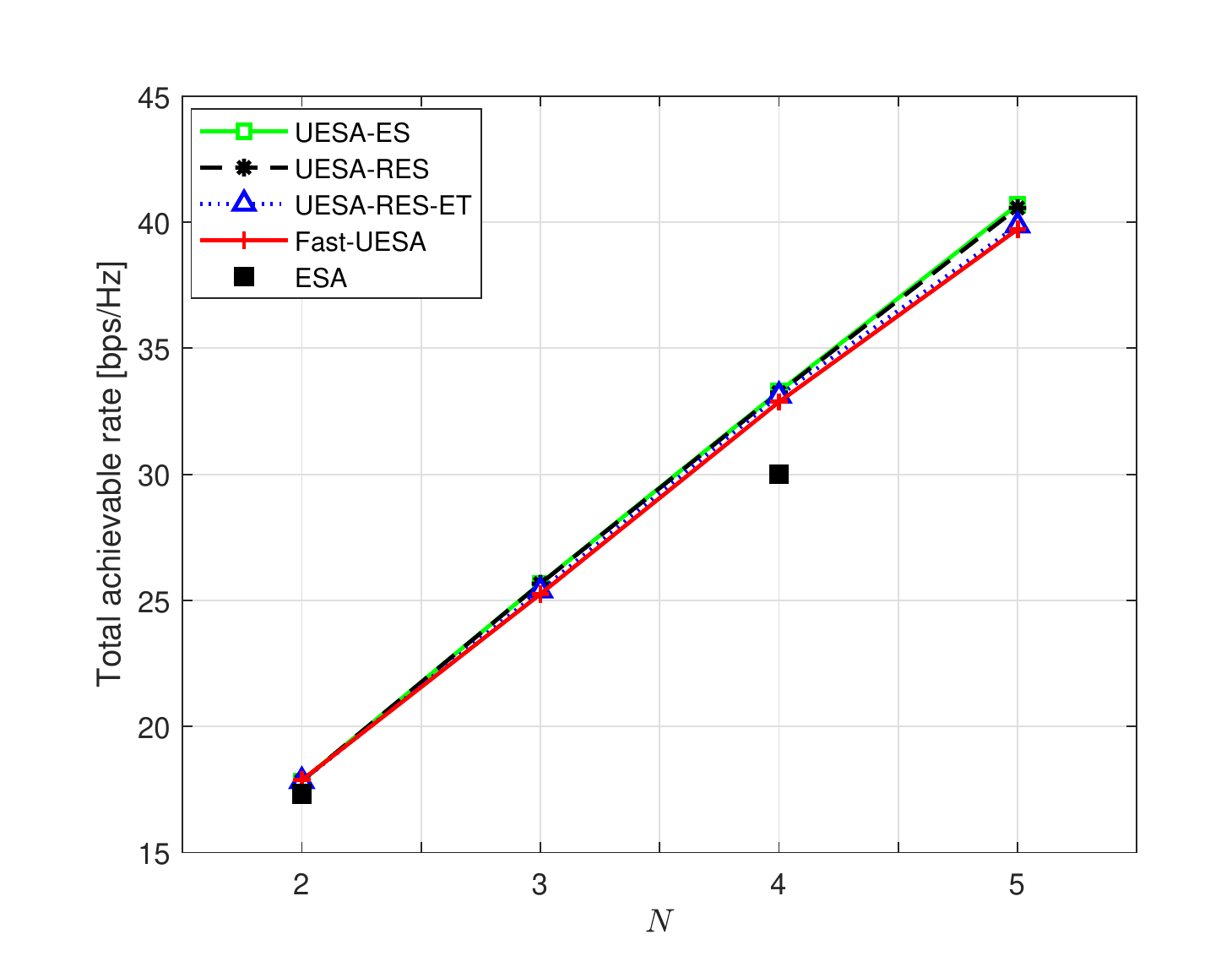}
		\caption{Comparison of total achievable rates of the conventional ESA architecture and the proposed schemes with $N_r=64$, $N=K=\left\{2,3,4,5\right\}$ and SNR $ = 12$ dB.}
		\label{fig:Rate_vs_N}
	\end{figure} 
	
	In Figs. \ref{fig:rate_48_3} and \ref{fig:rate_64_4_4}, the total achievable rates and their upper bounds (UBs) in \eqref{rate_8} and \eqref{ub_uesa}, of the ESA and UESA schemes for $N=K=4$ and $N_r=\left\{32, 64\right\}$, are presented. For $N_r=\left\{32, 64\right\}$, we set $count_{max}=\left\{30, 280\right\}$ for the UESA-RES-ET scheme and $I = \left\{20, 40\right\}, \gamma = \left\{ 2, 4 \right\}$ for the Fast-UESA scheme, respectively. In both figures, the UESA-ES algorithm provides the highest total rates and upper bounds while those of the UESA-RES, UESA-RES-ET, and Fast-UESA schemes are slightly lower. For example, with $N_r=32, N=K=4$, and SNR = 0 dB, the UESA-ES scheme achieves approximately a 10.5$\%$ higher total rate compared to the ESA scheme, while the corresponding enhancements attained by the UESA-RES, UESA-RES-ET, and Fast-UESA algorithms are approximately 10$\%$. The improvement in total rate can be explained by the upper bounds plotted in these figures. Specifically, the proposed UESA schemes not only enhance the upper bounds, but also reduce gaps between the upper bounds and achievable rates. The total achievable rates of the UESA-RES-ET  scheme are almost identical to those of the UESA-RES scheme in spite of its reduced search region. Although the Fast-UESA algorithm has much lower complexity compared with UESA-RES and UESA-RES-ET, as will be shown in Section V-D, only a marginal performance loss is seen for both systems.
	
	Fig. \ref{fig:Rate_vs_N} presents the total achievable rates of the proposed UESA and conventional ESA architectures with various numbers of RF chains $N=\left\{2, 3, 4, 5\right\}, N_r=64$, at SNR $= 12$ dB. Because the antennas are equally allocated to sub-antenna arrays in the conventional ESA architecture, it cannot serve $N=\left\{3, 5\right\}$ without significant modifications. By contrast, the proposed UESA architecture can work for every number of RF chains, which makes it more flexible for the implementation of sub-connected analog hybrid beamforming networks and made it easier to achieve a certain tradeoff between spectral and energy efficiency. Furthermore, Fig.  \ref{fig:Rate_vs_N} shows that the proposed UESA schemes outperform the ESA, and the performance improvement is clearer for large values of $N$.
	
	\subsection{Energy efficiency}
	
%

	\begin{figure*}[t]
		\subfigure[$N_r=\left\{32, 64\right\},N=K=4$]
		{
			\includegraphics[scale=0.44]{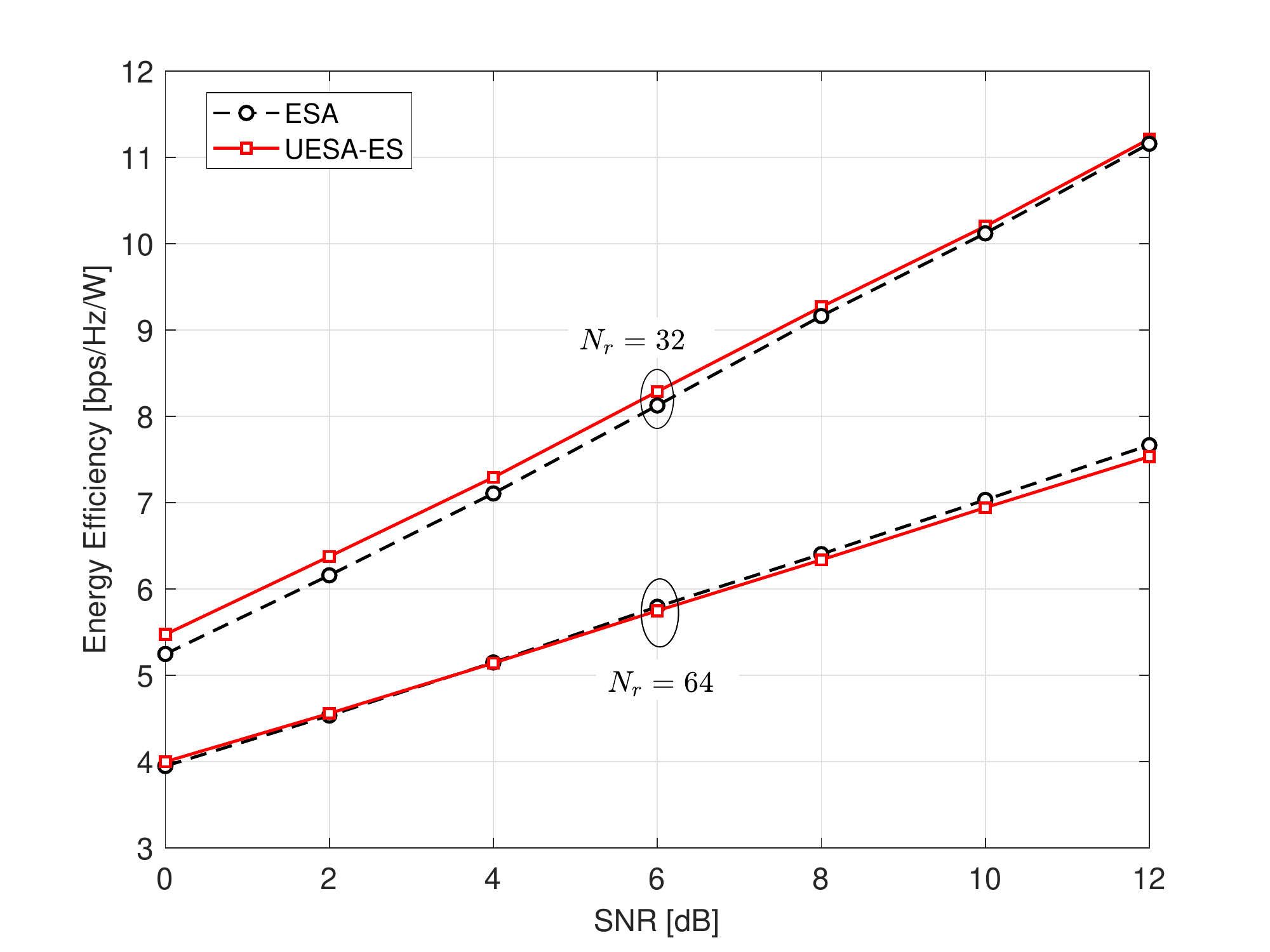}
			\label{fig:EE}
		}
		\subfigure[$N_r=\left\{8, 16, 32, 48, 64\right\},N=K=4$, SNR $= 12$ dB]
		{
			\includegraphics[scale=0.6]{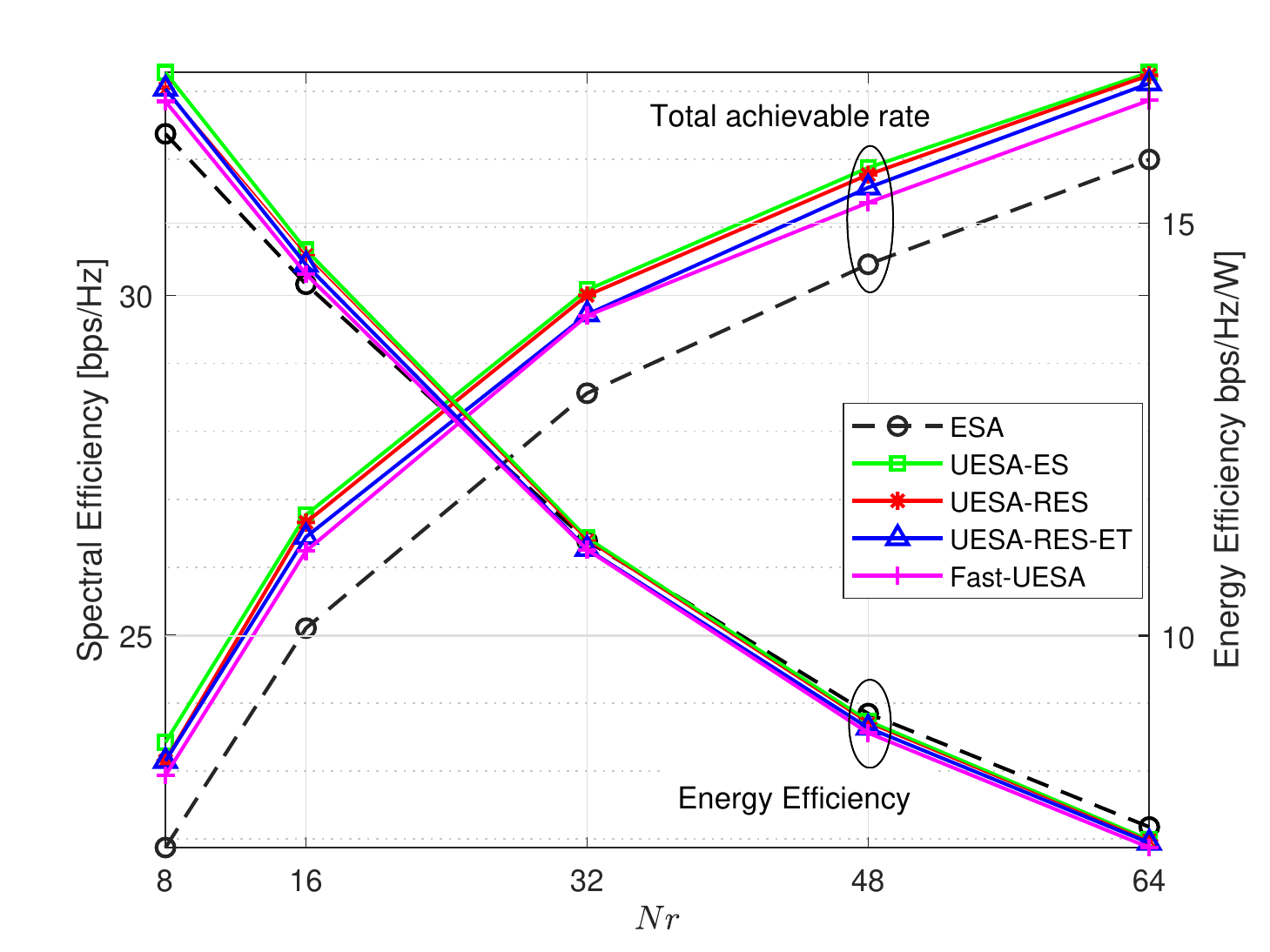}
			\label{fig:SE_EE_vs_Nr}
		}
		\caption{Comparison of spectral and energy efficiencies of the conventional ESA architecture and the proposed schemes.}
	\end{figure*}

	To compute the power consumption in \eqref{P_ESA} and \eqref{P_UESA}, we use the same assumption as in \cite{mendez2016hybrid} and \cite{mendez2015channel}, i.e., $P_{\text{LNA}} = p, P_{\text{ADC}} = 10p, P_{\text{RF}} = 2p, P_{\text{PS}} = 1.5p$, and $P_{\text{SW}} = 0.25p$, where $p=20$ mW is the reference power value. Then, from \eqref{P_ESA} and \eqref{P_UESA}, we obtain
	\begin{align*}
		P_{\text{\text{ESA}}} &= 50 N_r + 240 N \text{ (mW)}, \\
		P_{\text{\text{UESA}}} &= 54 N_r + 240 N \text{ (mW)}. 
	\end{align*} 
	The proposed UESA architecture is observed to require $4N_r$ mW of additional power. However, this is much smaller than the total required power.
	
	In Fig. \ref{fig:EE}, we compare the energy efficiency of the considered architectures for $N=K=4$ and $N_r = \left\{32, 64 \right\}$. For the UESA architecture, the UESA-ES algorithm is used. Energy efficiency is defined as the ratio of the total achievable rate to the power consumed. In both systems, we observe that even though the UESA architecture requires higher power than the ESA architecture, it achieves comparable energy efficiency to the conventional ESA for $N_r=64$. In case of $N_r = 32$, the improvement in the energy efficiency of the UESA architecture is clear, especially at low SNR values.
	
	In Fig. \ref{fig:SE_EE_vs_Nr} the total achievable rates and energy efficiencies of the proposed schemes, namely, UESA-ES, UESA-RES, UESA-RES-ET, and Fast-UESA, are compared with those of the ESA scheme for $N_r=\left\{8, 16, 32, 48, 64\right\},N=K=4$ with SNR $= 12$ dB. In this figure, the enhancement in the achievable rate of the proposed schemes is clear for all values of $N_r$. Furthermore, because the UESA architecture requires $4N_r$ mW of additional power, the gains in terms of energy efficiency of the proposed schemes decrease as $N_r$ increases. However, even at $N_r=64$, the energy efficiencies of the UESA and ESA architectures are still comparable.
	
	\subsection{Computational complexities}
	We numerically analyze the complexity reduction of the proposed near-optimal algorithms, the UESA-RES, UESA-RES-ET, and Fast-UESA, in comparison with that of the optimal UESA-ES algorithm. We note that the complexities of candidate ordering in steps 1--7 of Algorithm \ref{algorithm:UESA_RES_EE} can be done offline, and thus its complexity can be ignored. Furthermore, the complexity of ordering rows of the channel is significantly lower than that of executing Algorithm \ref{algorithm:factorization_AC} for multiple candidates. Consequently, the overall complexity becomes approximately proportional to the number of examined candidates $\comM$ in each algorithm.
	
	In Table I, we present the numbers of examined candidates in the UESA-ES, UESA-RES, UESA-RES-ET, and Fast-UESA algorithms for two environments, $N=K=4$ and $N_r=\left\{32, 64\right\}$ with SNR = $12$ dB, $count_{max} = \left\{30, 280\right\}$ for the UESA-RES-ET algorithm, and $I = \left\{20, 40\right\}, \gamma = \left\{2, 4\right\}$ for the Fast-UESA algorithm. Table I shows that the UESA-ES requires a substantially larger number of candidates than the UESA-RES, UESA-RES-ET, and Fast-UESA. By contrast, the UESA-RES and UESA-RES-ET algorithms explore the reduced search region, and thus significantly reduce complexity while producing only marginal performance losses, as observed in Fig. 4. Furthermore, the Fast-UESA algorithm has significantly lower complexity compared with the other proposed algorithms. Specifically, in a large MIMO system with $N_r=64$ antennas and $N=4$ RF chains, the UESA-ES tests approximately $4\times 10^4$ candidates to find the optimal solution, but the proposed UESA-RES and UESA-RES-ET algorithms need to test only $1906$ and $770$ candidates, approximately $5\%$ and $2\%$, respectively, of that tested by the UESA-ES. Meanwhile, the Fast-UESA algorithm requires testing only 35 candidates, corresponding to $0.9 \%, 1.8 \%, $ and $4.5 \%$ complexity of the UESA-ES, UESA-RES, and UESA-RES-ET algorithm.
	\begin{table}[t]
		\renewcommand{\arraystretch}{1.3}
		\caption{Comparison of the average numbers of candidates examined by the\\ UESA-ES, UESA-RES, UESA-RES-ET, and Fast-UESA algorithms for $N=K=4$ and $N_r=\left\{32, 64\right\}$ at SNR = $12$ dB}
		\label{tab:Comparison_mul}
		\centering
		\begin{tabular}{c|c|c|c}
			\hline
			
			\multicolumn{2}{c|}{{Algorithms}}  &  \makecell{$N_r=32$}    & \makecell{$N_r=64$}\\
			
			\hline
			\hline
			
			\multicolumn{2}{c|}{UESA-ES}    &  $4495$  &    $39711$\\
			\hline
			
			\multicolumn{2}{c|}{UESA-RES}    &  $249$  &   $1906$\\
			\hline
			
			\multicolumn{2}{c|}{UESA-RES-ET}    &  \makecell{$72$ $(count_{max}=30)$}  &   \makecell{$770$ $(count_{max}=280)$}\\
			\hline
			
			\multicolumn{2}{c|}{Fast-UESA}    &  \makecell{$28$ $(I=20)$}  &   \makecell{$35$ $(I=40)$}\\
			\hline
		\end{tabular}
	\end{table}

	\subsection{Performance comparison between the ESA, UESA, AS, AHB, and combined UESA-AHB schemes}
	
	\begin{figure}[t]
		\centering
		\includegraphics[scale=0.6]{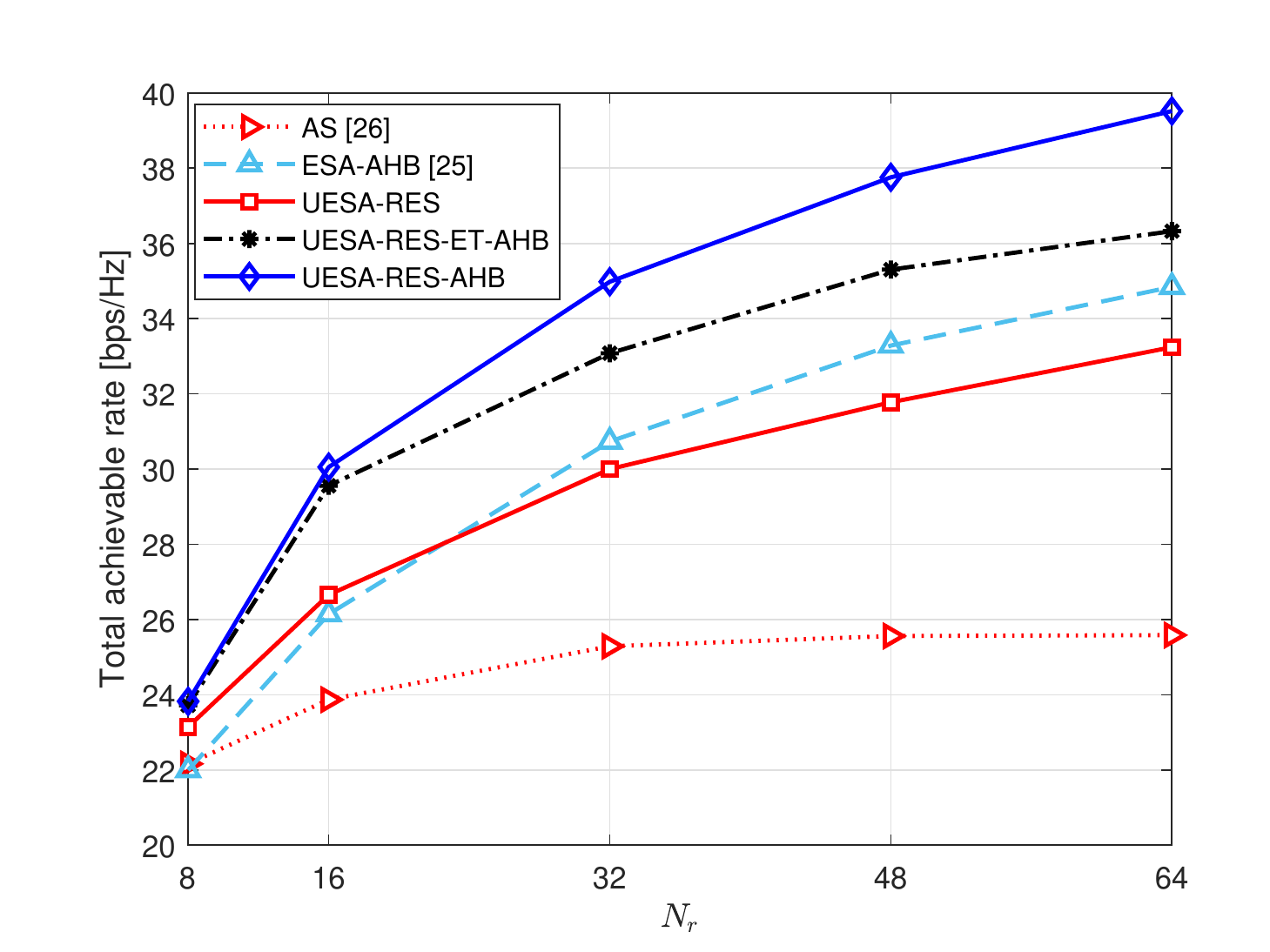}
		\caption{Comparison of total achievable rates of the AS \cite{chen2007efficient}, ESA-AHB \cite{zhu2016adaptive}, and the combined UESA-AHB schemes, namely, UESA-RES-AHB and UESA-RES-ET-AHB, with $N_r = \left\{8,16,32,48,64\right\}, N=K=4$, and SNR $= 12$ dB.}
		\label{fig:Rate_vs_Nr_comp_AHP}
	\end{figure}

	In Fig. \ref{fig:Rate_vs_Nr_comp_AHP}, we compare the performance of the proposed UESA scheme to those of the antenna selection (AS) \cite{chen2007efficient} and the adaptive hybrid beamforming (AHB) \cite{zhu2016adaptive} schemes with $N_r = \left\{8,16,32,48,64\right\}$, $N=K=4$, and SNR $=12$ dB. In \cite{zhu2016adaptive}, the AHB scheme is proposed for the conventional ESA architecture, and we refer to it as ESA-AHB to avoid confusion. For the UESA architecture, the UESA-RES algorithm is used. We observe that while the ESA-AHB scheme performs better than the UESA-RES scheme for $N_r \geq 32$, the UESA-RES scheme achieves better performance for $N_r \leq 16$, whereas the AS scheme performs far worse than both the UESA-RES and ESA-AHB schemes for all the considered systems.

	We note that the AHB algorithm can be combined with the UESA architecture to create an unequal sub-array architecture with adaptive hybrid beamforming (UESA-AHB). Specifically, by inheriting the design of the UESA and AHB schemes, the UESA-AHB scheme employs a combining matrix that has different numbers of non-zero elements in combining vectors reflecting the unequal numbers of antennas in sub-antenna arrays, and the positions of non-zero elements are distributed over the range $[1, N_r]$. Furthermore, unlike the ESA-AHB scheme, in the UESA-AHB scheme, the combining matrix depends on different numbers of assigned antennas across sub-arrays. Therefore, the combining matrix in the UESA-AHB scheme should be jointly optimized with the antenna allocation of the UESA architecture. This can be done by applying the AHB scheme to step 5 of Algorithm \ref{algorithm:RES} and step 13 of Algorithm \ref{algorithm:UESA_RES_EE}, resulting in the UESA-RES-AHB and the UESA-RES-ET-AHB schemes, respectively.
	
	In Fig. \ref{fig:Rate_vs_Nr_comp_AHP}, we show the total achievable rates of the UESA-RES-AHB and UESA-RES-ET-AHB schemes; and $count_{max}=40$ is used for early termination in the UESA-RES-ET-AHB scheme. It is clearly seen that the combined UESA-AHB schemes achieve substantial gains over the existing schemes for all the considered values of $N_r$. In particular, the UESA-RES-AHB scheme shows an approximately $14.8 \%$ improvement in the total achievable rate with respect to the ESA-AHB scheme with $N_r=64$.
		
	We note that the same switching network is employed by both the UESA architecture and the ESA-AHB scheme \cite{zhu2016adaptive}. Specifically, the switching network in Fig. \ref{fig:DynamicUESA} can be used for both antenna allocation in the UESA architecture and RF chain-to-antenna connection in the ESA-AHB scheme. Therefore, the power consumptions of the UESA-RES, ESA-AHB, and UESA-AHB schemes are the same. As a result, with the significant improvement in the total achievable rate, the UESA-AHB schemes achieve the highest energy efficiency compared with both the UESA-RES and UESA-AHB schemes.
	
	\section{Conclusion}
	
	In this work, we propose a novel unequal sub-connected architecture for a hybrid combining at the receiver of massive MIMO systems. Unlike the conventional ESA architecture, the proposed UESA architecture assigns different numbers of antennas to sub-antenna arrays based on channel conditions. Our analytical derivations suggest that when the factorization-aided analog-combining algorithm is employed, fewer antennas should be assigned to the first sub-antenna arrays to approach the upper bound of the total achievable rate, while more antennas should be assigned to the last sub-antenna arrays. We also show that channel rows should be ordered in decreasing order of their norms to enhance the upper bound of the total achievable rate. The simulation results demonstrate that the proposed architecture can achieve up to a $10\%$ improvement in total rate with a small increase in power consumption. To reduce the complexity of determining the antenna-to-sub-array connections, the near-optimal UESA-RES, UESA-RES-ET, and Fast-UESA algorithms are proposed. The numerical results show that they can significantly reduce the complexity of the UESA-ES with marginal performance losses. The proposed UESA architecture requires lower power consumption while improving the spectral efficiency of the sub-connected architecture. These advantages can also be beneficial for the transmitter of the downlink. Therefore, in future research, the spectral-efficiency analysis and antenna allocation algorithms for the UESA architecture can be applied to the transmitter of the downlink MIMO systems.
	
	\appendices
	
	\section{Proof of Lemma \ref{lemma:factorization_rate}}
	\label{appendix:factorization_rate}
	For simplicity of presentation, we define $\mQ = \mI_K+  \rho \mH^H \mW \mW^H \mH$, and then get
	\begin{align*}
		\mQ = \mI_K+  \rho \mG_1 + \rho \mG_2 + \ldots + \rho \mG_N. \numberthis \label{Q_1}
	\end{align*}
	By defining $\mE_1 = \mI_K+  \rho \mG_1$, \eqref{Q_1} can be rewritten as $\mQ  = \mE_1 \left(\mI_K + \rho \mE_1^{-1} \mG_2 + \ldots + \rho \mE_1^{-1} \mG_N\right).$ Similarly, by defining $\mE_2 = \mI_K + \rho \mE_1^{-1} \mG_2$, we obtain
	\begin{align*}
		\mQ 
		= \mE_1 \mE_2 \left(\mI_K + \rho \mE_2^{-1} \mE_1^{-1} \mG_2 + \ldots + \rho \mE_2^{-1} \mE_1^{-1} \mG_N\right).
	\end{align*}
	In a similar manner, $\mQ$ can be factorized to the product of $\mE_n, n=1,\ldots,N$, i.e.,
	\begin{align*}
		\mQ 
		= \mE_1 \mE_2 \ldots \mE_N, \numberthis \label{eqn_Q}
	\end{align*}
	where $\mE_n = \mI_K + \rho (\mE_1 \ldots \mE_{n-1})^{-1} \mG_n$. By defining $\mQ_0 = \mI_K$ and $\mQ_{n} = \mE_1 \ldots \mE_{n}$, $n = 1, \ldots, N-1$, we have
	\begin{align*}
		\mQ_n = \mQ_{n-1} \mE_n, \numberthis \label{Q_i}
	\end{align*}
	and $\mE_n$ can be expressed as
	\begin{align}
		\label{eqn_E}
		\mE_n = \mI_K + \rho \mQ_{n-1}^{-1} \mG_n = \mI_K + \rho \mQ_{n-1}^{-1} \mH_n^H \vw_n \vw_n^H \mH_n.
	\end{align}
	From \eqref{eqn_Q} and \eqref{eqn_E}, \eqref{rate_2} can be rewritten as
	\begin{align*}
		R &= \log_2 \det \mQ = \log_2 \det \left( \mE_1 \mE_2 \ldots \mE_N \right)
		=  \sum_{n=1}^{N} \log_2 \det \left(\mI_K + \rho \mQ_{n-1}^{-1} \mH_n^H \vw_n \vw_n^H \mH_n \right) \\
		&=  \sum_{n=1}^{N} \log_2 \left(1 + \rho \vw_n^H \mH_n \mQ_{n-1}^{-1} \mH_n^H \vw_n \right), \numberthis \label{rate_3}
	\end{align*}
	where the last equality follows the fact that $\det(\mI_K + \va \vb^T) = 1 + \vb^T \va$ with $\va = \mQ_{n-1}^{-1} \mH_n^H \vw_n$ and $\vb^T = \vw_n^H \mH_n$. For simplicity, we define $\mT_n = \mH_n \mQ_{n-1}^{-1} \mH_n^H$, by inserting \eqref{eqn_E} into \eqref{Q_i}, we get
	\begin{align}
		\label{Q_n_11}
		\mQ_{n-1} = \mQ_{n-2} + \rho \mG_{n-1}.
	\end{align}
	Then, from \eqref{Q_n_11}, \eqref{rate_3} can be written as \eqref{rate_4}. This completes the proof.
	
	\section{Proof of Lemma \ref{lemma:trace_Tn_decrease}}
	\label{appendix:prove_traceTn_decrease}
	We have
	\begin{align*}
		{\tr{\mT_n}} 
		= {\tr{\mH_n \mQ_{n-1}^{-1} \mH_n^H}}
		= \tr{{\mH_n^H \mH_n}  {\mQ_{n-1}^{-1}} }. \numberthis \label{tr_2}
	\end{align*}
	When $N_r$ grows while $K$ is kept constant, we have $\mH_n^H \mH_n \approx M \mI_K$ \cite{lu2014overview}, which yields
	\begin{align*}
		{\tr{\mT_n}} 
		\approx M {\tr{\mQ_{n-1}^{-1}}}. \numberthis \label{tr_3}
	\end{align*}
	Because we have $\mQ_{n-1}  = \mQ_{n-2} \mE_{n-1}$, $\tr{\mQ_{n-1}^{-1}}$ in \eqref{tr_3} can be expressed as
	\begin{align*}
		\tr{\mQ^{-1}_{n-1}} 
		&= \tr{\mE_{n-1}^{-1} \mQ^{-1}_{n-2}}
		\leq   \lambda_{max}\left(\mE_{n-1}^{-1} \right) \tr{\mQ^{-1}_{n-2}} \numberthis \label{tr_33}\\
		&= \lambda_{min}^{-1}\left(\mE_{n-1} \right) \tr{\mQ^{-1}_{n-2}}, \numberthis \label{tr_4}
	\end{align*}
	where $\lambda_{max}(\mA)$ and $\lambda_{min}(\mA)$ are the largest and smallest eigenvalues of  $\mA$, respectively. Here, \eqref{tr_33} is obtained by the inequality $\tr{\mA \mB} \leq \lambda_{max}(\mA) \tr{\mB}$ \cite{kleinman1968design, wang1986trace}. From \eqref{eqn_E}, we get
	\begin{align*}
		\lambda_{min} \left(\mE_{n-1} \right) = 1 + \rho \lambda_{min} \left(\mQ_{n-1}^{-1} \mG_n\right). \numberthis \label{lamda_min}
	\end{align*}
	Based on $\mG_{n-1} = \mH_{n-1}^H \vw_{n-1} \vw_{n-1}^H \mH_{n-1}$, $\mQ_0 = \mI_N$, and \eqref{Q_n_11}, we find that $\mQ_{n-1}^{-1} \mG_n$ in \eqref{lamda_min} is a semidefinite Hermitian matrix, and $\lambda_{min} \left(\mQ_{n-1}^{-1} \mG_n\right) > 0$. As a result, we have $\lambda_{min} \left(\mE_{n-1} \right) > 1$ and $\lambda_{min}^{-1}\left(\mE_{n-1} \right) < 1$. Then, from \eqref{tr_4}, we obtain
	\begin{align*}
		\tr{\mQ^{-1}_{n-1}} <  \tr{\mQ^{-1}_{n-2}}, \numberthis \label{tr_5}
	\end{align*}
	which implies that $\tr{\mQ_{n}^{-1}}$ is a decreasing function of $n$. In the conventional ESA system, $M$ in \eqref{tr_3} is a constant. Consequently, ${\tr{\mT_n}}$ decreases with $n$.
	
	\section{Proof of Theorem 1}
	\label{appendix:proof_of_theorem1}
	
	Considering that $\mG_{n-1} = \mH_{n-1}^H \vw_{n-1} \vw_{n-1}^H \mH_{n-1}$ is of rank one, and according to \eqref{Q_n_11} and the result in \cite{miller1981inverse}, we obtain
	\begin{align*}
		\mQ_{n-1}^{-1} = \left(\mQ_{n-2} + \rho \mG_{n-1}\right)^{-1} 
		= \mQ_{n-2}^{-1} - \frac{\rho \mQ_{n-2}^{-1} \mG_{n-1} \mQ_{n-2}^{-1}}{1 + \rho \tr{\mG_{n-1} \mQ_{n-2}^{-1}}}. \numberthis \label{Q_inv}
	\end{align*}
	Then, the difference between $\tr{\mQ_{n-1}^{-1}}$ and $\tr{\mQ_{n-2}^{-1}}$ is given by
	\begin{align*}
		\Delta_{n-1} 
		= {\tr{\mQ_{n-2}^{-1}}} -{\tr{\mQ_{n-1}^{-1}}} 
		= \frac{\rho \tr{ {\mQ_{n-2}^{-1}} {\mG_{n-1}} {\mQ_{n-2}^{-1}}}}{ 1 + \rho \tr{{\mG_{n-1}} {\mQ_{n-2}^{-1}}}}. \numberthis \label{Delta_1}
	\end{align*}
 	Furthermore, $\mQ_{n-2}$ can be expressed as
	\begin{align*}
		\mQ_{n-2} = \mI_K + \rho \mG_1 + \ldots + \rho \mG_{n-2} = \mI_K + \rho \tilde{\mG}_{n-2},
	\end{align*}
	where $\tilde{\mG}_{n-2} = \rho \mG_1 + \ldots + \rho \mG_{n-2} $. According to the result in \cite{henderson1981deriving}, we have
	\begin{align*}
		\mQ_{n-2}^{-1} 
		= \mI_K - \rho \tilde{\mG}_{n-2} \left(\mI_K + \rho \tilde{\mG}_{n-2}\right)^{-1}
		= \mI_K - \rho \tilde{\mG}_{n-2} \mQ_{n-2}^{-1} ,
	\end{align*}
	which yields
	\begin{align*}
		{\mQ_{n-2}^{-1}} {\mG_{n-1}} {\mQ_{n-2}^{-1}} 
		= \mQ_{n-2}^{-1} \mG_{n-1} - \rho \mQ_{n-2}^{-1} \mG_{n-1} \tilde{\mG}_{n-2} \mQ_{n-2}^{-1}. \numberthis \label{num_1}
	\end{align*}
	In the second term on the right-hand side of \eqref{num_1}, we have $\mG_{n-1} \tilde{\mG}_{n-2} = \rho \sum_{i=1}^{n-2} \mG_{n-1} \mG_i$. Therefore, \eqref{num_1} yields
		\begin{align*}
			\tr {{\mQ_{n-2}^{-1}} {\mG_{n-1}} {\mQ_{n-2}^{-1}}} 
			= \tr {\mQ_{n-2}^{-1} \mG_{n-1}} - \rho^2 \sum_{i=1}^{n-2} \underbrace{\tr {\mQ_{n-2}^{-1}  \mG_{n-1} \mG_i \mQ_{n-2}^{-1}}}_{\triangleq \Upsilon}. \numberthis \label{num_3}
		\end{align*}
		By using the inequality $\tr{\mA \mB} \geq \lambda_{min}(\mA) \tr{\mB}$ \cite{kleinman1968design, wang1986trace}, we have
		\begin{align*}
			\Upsilon 
			&\geq \lambda_{min}(\mQ_{n-2}^{-1}  \mG_{n-1}) \tr{\mG_i \mQ_{n-2}^{-1}} 
			\geq \lambda_{min}(\mQ_{n-2}^{-1}  \mG_{n-1}) \lambda_{min}(\mG_i)  \tr{ \mQ_{n-2}^{-1}} 
			\geq 0, \numberthis \label{num_4}
		\end{align*}
		where the inequality in \eqref{num_4} is obtained because $\mQ_{n-2}^{-1}$ and $\mG_{n-1}$ are positive semidefinite Hermitian  matrices for $n=1,2,\ldots,N$. From \eqref{num_3} and \eqref{num_4}, we have
		\begin{align*}
			\tr {{\mQ_{n-2}^{-1}} {\mG_{n-1}} {\mQ_{n-2}^{-1}}} \leq \tr {\mQ_{n-2}^{-1} \mG_{n-1}}. \numberthis \label{num_5}
		\end{align*}
	From \eqref{Delta_1} and \eqref{num_5}, we have $\Delta_{n-1} < 1$, which leads to $\sum_{n=1}^{N-1} \Delta_{n-1} < N-1$, and hence,
	\begin{align*}
		\tr{\mQ_{N-1}^{-1}} &= \tr{\mQ_{0}^{-1}} - \sum_{n=1}^{N-1} \Delta_{n-1} 
		> N - (N-1) = 1 \numberthis \label{tr_QN1}
	\end{align*}
	with the note that $\mQ_{0} = \mI_N$.

	Now assume that the UESA architecture is designed such that 
	\begin{align*}
		m_1 {\tr{\mQ_{0}^{-1} }} \approx \ldots \approx  m_N {\tr{\mQ_{N-1}^{-1} }}, \numberthis \label{m_trace}
	\end{align*}
	which yields $$m_n \approx \frac{m_1 \tr{\mQ_0^{-1}}}{\tr{\mQ_{n-1}^{-1}}} = \frac{N m_1 }{\tr{\mQ_{n-1}^{-1}}},$$ where the second equality is obtained from the fact that $\mQ_0 = \mI_N$. Furthermore, due to $\sum_{n=1}^{N} m_n = N_r$ and the decreasing property of $\tr{\mQ_n^{-1}}, n=0,\ldots,N-1$, as proven in Appendix \ref{appendix:prove_traceTn_decrease}, we have $N_r \approx N \sum_{n=1}^{N} \frac{m_1 }{\tr{\mQ_{n-1}^{-1}}} <  \frac{N^2 m_1 }{\tr{\mQ_{N-1}^{-1}}},$ which approximately leads to $m_1 > \frac{N_r \tr{\mQ_{N-1}^{-1}}}{N^2}$. From \eqref{tr_QN1}, we have $m_1 > \frac{N_r}{N^2}$. In hybrid beamforming for massive MIMO systems, it is generally assumed that a relatively small number of RF chains is used, i.e., $N \ll N_r$. Therefore, when $N_r$ grows and $N$ is kept constant, $m_1$ also grows. Considering that $m_1 \leq m_2 \leq \ldots \leq m_N$, we can conclude that for the proposed design of the UESA architecture, we have $\mH_n^H \mH_n \approx m_n \mI_K, \forall n \leq N$ in massive MIMO systems. From \eqref{tr_2}, we can write
	\begin{align*}
		{\tr{\mT_n}} 
		\approx m_n {\tr{\mQ_{n-1}^{-1}}}. \numberthis \label{tr_6}
	\end{align*}
	From \eqref{m_trace} and \eqref{tr_6}, we obtain \eqref{trace_equal} as the designing objective $\mathcal{O}_1$.
	
	\section{Rate of decrease of ${\tr{\mQ_{n-1}^{-1}}}$}
	\label{appendix:decreasing_rate_with_channel_ordering}
	
	In the second term on the right-hand side of \eqref{num_3}, we have
	\begin{align*}
		\mG_{n-1} \mG_i &= \mH_{n-1}^H \vw_{n-1} \vw_{n-1}^H  \mH_{n-1} \mH_{i}^H \vw_{i} \vw_{i}^H \mH_{i}. \numberthis \label{num_6}
	\end{align*}
	In Appendix \ref{appendix:proof_of_theorem1}, it is proven that in massive MIMO systems employing the UESA architecture, we have $\mH_n^H \mH_n \approx m_n \mI_K, \forall n \leq N$. Therefore, in \eqref{num_6}, we have $\mH_{n-1} \mH_i^H \approx \mathbf{0}, i \neq n-1,$ in massive MIMO systems \cite{lu2014overview}, which yields $\mG_{n-1} \mG_i \approx \mathbf{0}$. Consequently, in \eqref{num_3} we have $\Upsilon \approx 0$.
	
	Now, the difference between $\tr{\mQ_{n-1}^{-1}}$ and $\tr{\mQ_{n-2}^{-1}}$ in \eqref{Delta_1} can be approximated by
	\begin{align*}
		\Delta_{n-1} 
		&\approx \frac{\rho \tr{ {\mG_{n-1}} {\mQ_{n-2}^{-1}}}}{ 1 + \rho \tr{{\mG_{n-1}} {\mQ_{n-2}^{-1}}}}. \numberthis \label{Delta_2}
	\end{align*}
	Furthermore, we have
	\begin{align*}
		\sm{n} 
		&= {\vu_n^{\star}}^H \mU \Sigma \mU^H \vu_n^{\star} 
		= \tr{{\vu_n^{\star}}^H \mH_n \mQ_{n-1}^{-1} \mH_n^H \vu_n^{\star}}
		= \tr{\mH_n^H \vu_n^{\star} {\vu_n^{\star}}^H \mH_n \mQ_{n-1}^{-1}}. 
	\end{align*}
	Based on step 8 in Algorithm \ref{algorithm:factorization_AC}, we have $\tr{\mG_n \mQ_{n-1}^{-1}} = \tr{\mH_n^H \vw_n^{\star} {\vw_n^{\star}}^H \mH_n \mQ_{n-1}^{-1}}$ 
	with $\vw_n^{\star} = \mathcal{Q} \left(\vu_n^{\star}\right)$. Therefore, when $m_n$ grows in massive MIMO systems, we have $\tr{\mG_n \mQ_{n-1}^{-1}}$ $\rightarrow \infty$ as $\sm{n} \rightarrow \infty$. Then, $\Delta_{n-1}$ approaches one for a fixed SNR. Thus, channel ordering leads to no significant difference in the rate of decrease of $\tr{\mQ_{n-1}^{-1}}$.
	
	\bibliographystyle{IEEEtran}
	\bibliography{IEEEabrv,Reference}
	
\end{document}